\documentclass[10pt,journal,compsoc]{IEEEtran}

\ifCLASSOPTIONcompsoc
  \usepackage[nocompress]{cite}
\else
  \usepackage{cite}
\fi

\ifCLASSINFOpdf
\else
\fi

\usepackage[utf8]{inputenc}

\usepackage{cite}
\usepackage{algorithmic}
\usepackage{textcomp}

\usepackage{amssymb}

\usepackage{amsfonts} 

\usepackage{paralist}

\usepackage{amsthm}

\usepackage{blindtext}
\usepackage{mathtools}
\usepackage{booktabs} 
\usepackage{graphicx}
\usepackage{subcaption}
\usepackage{hyperref}
\usepackage{multicol}
\usepackage{longtable}
\usepackage{tabto}
\usepackage{comment}
\usepackage{xspace}
\usepackage{hyperref}
\usepackage{framed}
\usepackage{tcolorbox}
\usepackage{float}

\theoremstyle{definition}

\usepackage{blindtext}
\usepackage[ruled,vlined]{algorithm2e}

\SetAlCapSkip{1em}

\SetKwInput{KwInput}{Input}
\SetKwInput{KwOutput}{Output}

\makeatletter
\newcommand*{\encircle}[1]{\relax\ifmmode\mathpalette\@encircle@math{#1}\else\@encircle{#1}\fi}
\newcommand*{\@encircle@math}[2]{\@encircle{$\m@th#1#2$}}
\newcommand*{\@encircle}[1]{
  \tikz[baseline,anchor=base]{\node[draw,circle,outer sep=0pt,inner sep=.2ex] {#1};}}
\makeatother

\usepackage{array}

\usepackage[thinlines]{easytable}

\setlength{\belowcaptionskip}{-10pt}

\usepackage{tikz}
\usepackage{collcell}
\usepackage{rotating}

\usepackage[]{quoting}

\usepackage{enumitem}
\setlistdepth{9}

\newlist{myEnumerate}{enumerate}{9}
\setlist[myEnumerate,1]{label=(\arabic*)}
\setlist[myEnumerate,2]{label=(\Roman*)}
\setlist[myEnumerate,3]{label=(\Alph*)}
\setlist[myEnumerate,4]{label=(\roman*)}
\setlist[myEnumerate,5]{label=(\alph*)}
\setlist[myEnumerate,6]{label=(\arabic*)}
\setlist[myEnumerate,7]{label=(\roman*)}
\setlist[myEnumerate,8]{label=(\alph*)}
\setlist[myEnumerate,9]{label=(\roman*)}

\definecolor{shadecolor}{named}{lightgray}

\usepackage{xcolor}
\hypersetup{
    colorlinks,
    linkcolor={black!50!black},
    urlcolor={blue!80!black}
}
\usepackage{ifthen}
\newboolean{showcomments}
\setboolean{showcomments}{true} 
\ifthenelse{\boolean{showcomments}}
  {\newcommand{\nb}[2]{
    \fcolorbox{red}{yellow}{\bfseries\sffamily\scriptsize#1}
    {\sf\small$\blacktriangleright$\textit{#2}$\blacktriangleleft$}
  }
  
  }
  {\newcommand{\nb}[2]{}
  
  }

\usepackage[normalem]{ulem}

\usepackage{amsthm}

\theoremstyle{definition}
\newtheorem{definition}{Definition}[section]

\theoremstyle{remark}
\newtheorem*{remark}{Remark}

\makeatletter
\newcommand*{\balancecolsandclearpage}{
  \close@column@grid
  \clearpage
  \twocolumngrid
}
\makeatother

\usepackage{amsmath,scalerel}

\makeatletter
\newcommand*\bigcdot{\mathpalette\bigcdot@{.5}}
\newcommand*\bigcdot@[2]{\mathbin{\vcenter{\hbox{\scalebox{#2}{$\m@th#1\bullet$}}}}}
\makeatother

\newtheorem{theorem}{Theorem}[section]
\newtheorem{lemma}[theorem]{Lemma}

\newcommand{\CROME}{{{\sc CROME}}\xspace}
\newcommand{\NAME}{{\sc CROME}\xspace}

\newcommand{\Machine}{M}

\begin{document}

\title{Correct\textendash by\textendash Construction Design of Contextual Robotic Missions Using Contracts}

\author{
  Piergiuseppe~Mallozzi,
  Pierluigi~Nuzzo,
  Nir~Piterman,
  Gerardo~Schneider
  and~Patrizio~Pelliccione
\IEEEcompsocitemizethanks{
  \IEEEcompsocthanksitem P. Mallozzi is at the Department of Electrical Engineering and Computer Science,UC Berkeley. Previously he was with the Department of Computer Science and Engineering, Chalmers University of Technology. E-mail: mallozzi@chalmers.se, mallozzi@berkeley.edu.
  \IEEEcompsocthanksitem P. Nuzzo is at the Ming Hsieh Department of Electrical and Computer Engineering, Univeristy of Southern California, Los Angeles, USA. E-mail: nuzzo@usc.edu.
  \IEEEcompsocthanksitem N. Piterman is at the Department of Computer Science and Engineering, University of Gothenburg. E-mail: piterman@chalmers.se
  \IEEEcompsocthanksitem G. Schneider is at the Department of Computer Science and Engineering, University of Gothenburg. E-mail: gersch@chalmers.se
  \IEEEcompsocthanksitem P. Pelliccione is at Gran Sasso Science Institute (GSSI), L'Aquila, Italy. E-mail: patrizio.pelliccione@gssi.it
  }
}

\IEEEtitleabstractindextext{
\begin{abstract}
  Effectively specifying and implementing robotic missions poses a set of challenges to software engineering for robotic systems. These challenges stem from the need to formalize and execute a robot's high-level tasks while considering various application scenarios and conditions—also known as contexts—in real-world operational environments.

Writing correct mission specifications that explicitly account for multiple contexts can be tedious and error-prone. Furthermore, as the number of contexts—and consequently the complexity of the specification—increases, generating a correct-by-construction implementation (e.g., by using synthesis methods) can become intractable.

A viable approach to address these issues is to decompose the mission specification into smaller, manageable sub-missions, with each sub-mission tailored to a specific context. Nevertheless, this compositional approach introduces its own set of challenges in ensuring the overall mission's correctness.

In this paper, we propose a novel compositional framework for specifying and implementing contextual robotic missions using assume-guarantee contracts. The mission specification is structured in a hierarchical and modular fashion, allowing for each sub-mission to be synthesized as an independent robot controller. We address the problem of dynamically switching between sub-mission controllers while ensuring correctness under predefined conditions.

\end{abstract}

\begin{IEEEkeywords}
  Contract-Based Design, Mission Specification, Formal Verification, Reactive Synthesis
\end{IEEEkeywords}
}

\maketitle
\IEEEdisplaynontitleabstractindextext

\section{Introduction}

Robotic missions are increasingly being deployed in various fields, ranging from space exploration to healthcare, search and rescue, and industrial manufacturing. However, the effective specification and implementation of robotic missions remain challenges in software engineering for robotics. We refer to \emph{mission requirements} as a description of the robotic mission in natural language, while \emph{mission specifications} formulate mission requirements in a logical language with precise semantics~\cite{SpecificationPatternsTSE}.

Various approaches have been proposed in recent years to specify missions, for instance, by using logics~\cite{menghi2018multi,ulusoy2011optimal,fainekos2009temporal,guo2013revising,wolff2013automaton,kress2011robot,doi:10.1177/0278364914546174,DBLP:journals/corr/MaozR16,maoz2011aspectltl,maozsynthesis,MaozFSE}, state charts~\cite{bohren2010smach,thomas2013new,klotzbucher2012coordinating}, Petri nets~\cite{wang1991petri,ziparo2008petri}, domain-specific languages~\cite{GLRSWWCA12,campusano.ea:2017:live,DBLP:journals/corr/SchwartzNAM14,Ruscio2016,Bozhinoski2015,Ciccozzi4496,Doherty2012}, or robotic patterns~\cite{SpecificationPatternsTSE,TSE2023}. However, writing mission specifications that reflect the mission requirements is still a challenging and tedious task. This challenge is exacerbated by the variability of conditions and application scenarios that the system may encounter in real-world operational environments. Moreover, as mission specifications become more complex, generating a correct-by-construction implementation, e.g., using synthesis methods, can become intractable. 

One way to address the complexity of mission specifications is to break them down into smaller, more manageable pieces. However, large and complex specifications can be difficult to understand and work with, making it challenging to decompose them into smaller parts. 
Frameworks that can effectively guide the implementation of robotic mission specifications in a modular fashion, via composition of sub-missions, while guaranteeing correctness, are highly desirable.

In this paper, we propose a framework for the specification and implementation of contextual robotic missions. Our approach leverages assume-guarantee contracts to articulate complex missions in a hierarchical and modular manner. These missions are broken down into smaller, more manageable sub-missions, referred to as \emph{tasks}. Central to this methodology is the Contract-based Goal Graph (CGG), an enhanced version of a formal model previously introduced in our earlier work \cite{mallozzi2020crome}. The CGG serves as a pivotal tool for organizing and analyzing the mission's goals, structuring the relationships between tasks in a hierarchical graph.

Each task within this framework is associated with a specific \textit{context}~\cite{mallozzi2020crome}, defined as a set of environmental conditions that may arise during a mission. We define a \textit{mission scenario} as a collection of tasks under compatible contexts. In our model, each scenario is tied to a context that is mutually exclusive to other contexts, ensuring that only one scenario is active at any given moment. Further, each scenario is synthesized into a correct-by-construction finite state machine (i.e., a \textit{controller}) that dictates the robot's responses to the environment's inputs. In this paper, we focus on reactive synthesis from Linear Temporal Logic (LTL), a powerful tool for specifying and verifying the behavior of reactive systems like robots in dynamic environments.

To address the challenge of varying contexts, our framework incorporates a mechanism for dynamic switching between controllers. This feature is crucial for maintaining mission correctness, particularly during \textit{context switches} between different mission scenarios as the environment changes.

Building upon our initial introduction of \NAME in \cite{mallozzi2020crome}, this paper presents significant enhancements and extensions. The original concept of using contracts~\cite{benveniste2018contracts,Nuzzo15b} and context to formalize mission tasks, along with robotic patterns for converting mission requirements into specifications, has been further developed. The key advancements include an improved and more sophisticated version of the Contract-based Goal Graph (CGG), providing a more structured and comprehensive representation of the mission specification. Additionally, we propose a novel methodology for dynamic plan generation. This approach not only ensures the continual satisfaction of the mission specification among context changes but also introduces an efficient synthesis mechanism for associated controllers and a robust system for automatically managing transitions between them.

In this paper, we propose a framework for the specification and implementation of contextual robotic missions. We leverage assume-guarantee contracts to specify a complex mission in a hierarchical and modular way as an 
aggregation of smaller sub-missions, which we term \emph{tasks}. 
We introduce the Contract-based Goal Graph (CGG) as a central component of our approach. The CGG is a formal framework used to hierarchically organize and represent the mission's goals. It serves as a tool to analyze and structure the relationships between tasks.
Each task corresponds to a specific \textit{context}~\cite{mallozzi2020crome}, which is a set of environmental conditions (environment states) that can occur during a mission. We denote by a \textit{mission scenario} a set of tasks that have compatible contexts. All mission scenarios are associated with contexts that are mutually exclusive to other contexts in the mission, so that, at any moment in time, only one context and one scenario are active. Moreover, each scenario can be realized into a controller, which is a finite state machine describing the robot's behavior for every input from the environment. 
In this paper, we focus on reactive synthesis from Linear Temporal Logic (LTL), which is a powerful tool for specifying and verifying the behavior of reactive systems like robots in dynamic environments.
To satisfy a contextual mission, the robot must alternate between mission scenarios as the context changes, leading to a \textit{context switch}. We then propose a method to dynamically switch between task controllers while ensuring the correctness of the overall mission.

Some of the results of this paper have first appeared in our conference submission, where we first introduced 
\NAME~\cite{mallozzi2020crome}, 
where we employed the use of contracts~\cite{benveniste2018contracts,Nuzzo15b} and context to formalize mission tasks, and robotic patterns to capture mission requirements and convert them into specifications. In this paper, we enhance and extend \NAME in two ways. Firstly, we introduce a novel and improved version of the formal model called the \textit{Contract-based Goal Graph} (CGG), which provides a structured representation of the mission specification in graph form. Secondly, we propose a methodology to generate a plan dynamically, which ensures that the mission specification is always satisfied, even when there are context changes. We achieve this by associating controllers with each mission scenario, which can be synthesized efficiently, and providing a mechanism to automatically switch between them while guaranteeing the satisfaction of the overall mission.

\begin{figure}[t]
  \centering
  \includegraphics[width=0.6\columnwidth]{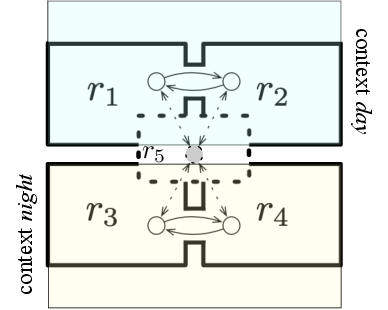}
  \caption{Running example map consisting of five regions and two contexts.}
  \label{fig:illustrative}
 \end{figure}

\subsection{Running Example} \label{sec:running}

We model time as an infinite sequence of discrete steps. At any time step, the robot can be located in one of the five regions in the map shown in Figure~\ref{fig:illustrative}, denoted by $r_1, r_2, r_3, r_4, r_5$, and can move from one region to an adjacent one in a single time step. We assume two mutually exclusive contexts, \textit{day} and \textit{night}. A single sensor is used to detect the presence of a person, denoted by \textit{person}, and two actions are available, \textit{greet} and \textit{register}.

The mission consists of continuously visiting different regions in a context-dependent fashion. During the day, the robot must visit regions $r_1$ and $r_2$ in sequence, starting from $r_1$. During the night, it must visit regions $r_3$ and $r_4$ in order, starting from $r_3$. Whenever the context changes, the robot must resume visiting the regions from the last visited one. When transitioning from regions $r_1, r_2$ to regions $r_3, r_4$, the robot must pass through region $r_5$, which requires an additional time step. Moreover, whenever the robot senses a person in any region, it immediately executes the greeting action. During the day, when a person is detected, the robot must also register them in the next step. Our goal is to generate a controller that can handle any context switch and satisfy the mission.

\subsection{Motivation}

The formalization of mission specifications, such as the one presented in Section~\ref{sec:running}, using a formal language like linear temporal logic (LTL)~\cite{pnueli1977temporal}, is fraught with challenges. These include \textit{1) the need to effectively represent different contexts} and \textit{2) the necessity to maintain the mission's current state across context transitions}.

Correctly modeling the behavior of context variables is crucial for fulfilling contextual missions. If these variables are not properly constrained, the resulting behaviors could lead to either incessant context switching, preventing the mission's completion, or a lack of switching, which could also lead to mission failure. Consider our running example: if the context variable toggles between \textit{day} and \textit{night} at every step, the requirement to pass through region $r_5$ during any context switch would introduce an additional time step. Conversely, specifying minimum or maximum durations for context stability can be cumbersome within the syntax of LTL. Additionally, binding a task to a context may necessitate specialized alterations to the logical formula encoding the task, which can differ depending on the task's nature. For instance, to ensure that the task LTL formulas $\varphi_1 = \mathsf{G}(p_1 \rightarrow p_2)$ and $\varphi_2 = (p_1 ~\mathsf{U} ~p_2)$ are upheld within context $x$, where $p_1$, $p_2$, and $x$ are atomic propositions, it would be necessary to transform the formulas into 
$\varphi_1' = \mathsf{G}((p_1 \land x) \rightarrow (p_2 \land x))$ and 
$\varphi_2' = (((p_1 \land x) \lor \overline{x}) ~\mathsf{U} ~(p_2 \land x))$ to incorporate the context.

Moreover, the robot must keep track of the last visited region when the context shifts. In our scenario, if the context changes while the robot is in $r_1$, upon the return to the \textit{day} context, the robot should resume from the state following $r_1$, namely $r_2$. Such state tracking necessitates additional variables and complex modifications to the LTL task formulas to reflect these variable states.

\subsection{Paper Roadmap}
The structure of this paper is as follows. Section~\ref{sec:background} provides foundational knowledge on assume-guarantee contracts, linear temporal logic, and reactive synthesis. Section~\ref{sec:problem} outlines the problem of achieving a contextual mission. Section~\ref{sec:cgg} introduces the \textit{Contract-Based Goal Graph} (CGG), the formal model created by \NAME to depict contextual missions. In Section~\ref{sec:orchestration}, we explain how \NAME formulates robotic plans that reliably meet mission requirements amidst fluctuating contexts. Details on the implementation and evaluation of \NAME are discussed in Sections~\ref{sec:implementation} and~\ref{sec:evaluation}, respectively. Section~\ref{sec:related} reviews related work in several domains, and the paper concludes with Section~\ref{sec:conclusion}.

\section{Background}
\label{sec:background}

We provide background notions on the basic building blocks of \NAME, namely, contracts, linear temporal logic, and reactive synthesis.

\subsection{Assume-Guarantee Contracts}
\label{sec:backgroundcontracts}

Contract-based design has emerged as a design paradigm that provides formal support for building complex systems in a modular way. This support facilitates compositional reasoning, step-wise refinement of system requirements, and reuse of pre-designed components~\cite{Nuzzo15b, benveniste2018contracts, passerone2019coherent}. 

An \emph{assume-guarantee contract} (or simply \emph{contract}) $\mathcal{C}$ is a triple $(V, A, G)$, where $V$ is a set of variables, including inputs and outputs (or \emph{ports}), and $A$ (assumptions) and $G$ (guarantees) are sets of behaviors over $V$. For simplicity, we often omit $V$ and refer to contracts by their assumptions and guarantees. Assumptions $A$ describe the expected behavior from the environment, while guarantees $G$ specify the system's promised behavior under those assumptions. Behaviors are represented as traces over an alphabet; assumptions and guarantees are sets of traces that satisfy logical formulas. To clarify, we also express assumptions and guarantees using logical formulas.

\subsubsection{Contract Refinement} 

Refinement establishes a pre-order between contracts, formalizing the concept of substitution. Given two contracts $\mathcal{C} = (A, G)$ and $\mathcal{C}' = (A', G')$, we say that $\mathcal{C}$ refines $\mathcal{C}'$, denoted by  $\mathcal{C} \preceq \mathcal{C}'$, if and only if all assumptions of $\mathcal{C}'$ are encompassed by the assumptions of $\mathcal{C}$ and all guarantees of $\mathcal{C}$ are contained within the guarantees of $\mathcal{C}'$; that is, $A \supseteq A' \text{ and } G \subseteq G'$.
Refinement involves weakening the assumptions and strengthening the guarantees. When $\mathcal{C} \preceq \mathcal{C}'$, $\mathcal{C}'$ is considered an \textit{abstraction} of $\mathcal{C}$ and can replace $\mathcal{C}$ in the design.

\subsubsection{Contract Composition}

Contracts specifying different implementations can be combined via the composition operation ($\parallel$). Let $\mathcal{C}_1 = (A_1, G_1)$ and $\mathcal{C}_2 = (A_2, G_2)$ be two contracts. The composition $\mathcal{C}=(A,G)=\mathcal{C}_1  \parallel  \mathcal{C}_2$ can be computed as follows:
\begin{align}
 A & = (A_1 \cap A_2) \cup \overline{(G_1 \cap G_2)}, \label{eq:sat_composition_A}\\
 G & = G_1 \cap G_2. \label{eq:sat_composition_G}
\end{align}
Intuitively, an implementation meeting $\mathcal{C}$ must satisfy both $\mathcal{C}_1$ and $\mathcal{C}_2$'s guarantees, thus the intersection in the second equation.
An environment for $\mathcal{C}$ should also fulfill all assumptions, which explains the conjunction of $A_1$ and $A_2$. However, some assumptions in $\mathcal{C}_1$ may be guaranteed by $\mathcal{C}_2$ and vice versa, allowing the relaxation of $A_1 \cap A_2$ with the complement of $\mathcal{C}$'s guarantees~\cite{benveniste2018contracts}.

\subsubsection{Contract Conjunction}

Different contracts on a single implementation can be combined using the conjunction operation ($\land$). 
For contracts $\mathcal{C}_1 = (A_1, G_1)$ and $\mathcal{C}_2 = (A_2, G_2)$, the conjunction $\mathcal{C}=\mathcal{C}_1  \land  \mathcal{C}_2$ is the most general contract that is more precise than both $\mathcal{C}_1$ and $\mathcal{C}_2$. It is calculated by intersecting the guarantees and uniting the assumptions:

\begin{equation*}
 \mathcal{C} = (A_1 \cup A_2, G_1 \cap G_2).
\end{equation*}
An implementation satisfying \(\mathcal{C}$ must meet both \(\mathcal{C}_1$ and \(\mathcal{C}_2$'s guarantees, while operating under either \(\mathcal{C}_1$ or \(\mathcal{C}_2$'s environment. 

\subsection{Linear Temporal Logic}
\label{sec:ltl}

Given a set of atomic propositions $\mathcal{AP}$, which are Boolean statements over system variables, we define the satisfaction of a proposition $p \in \mathcal{AP}$ by a state $s$ of a system (a specific valuation of the system variables) as $s \models p$, if $p$ is \textit{true} in state $s$.
LTL formulas over $\mathcal{AP}$ are constructed using the following grammar:
\begin{align*}
    \varphi := p ~|~ \neg\varphi ~|~ \varphi_1 \lor \varphi_2 ~|~ \textbf{X} ~\varphi ~|~ \varphi_1 ~\textbf{U}~ \varphi_2,
\end{align*}
where $\varphi$, $\varphi_1$, and $\varphi_2$ are LTL formulas.
Conjunction ($\land$), implication ($\rightarrow$), and equivalence ($\leftrightarrow$) are definable from negation ($\neg$) and disjunction ($\lor$). Boolean constants \textit{true} and \textit{false} are defined as $\textit{true} = \varphi \lor \neg \varphi$ and $\textit{false} = \neg \textit{true}$, respectively. 
The temporal operators $\textbf{X}$ (next) and $\textbf{U}$ (until) allow us to express temporal properties. Additionally, the operators \textit{globally} ($\textbf{G}$) and \textit{eventually} ($\textbf{F}$) are derived as $\textbf{F}~ \varphi = \textit{true} ~\textbf{U}~ \varphi$ and $\textbf{G} ~ \varphi = \neg(\textbf{F}(\neg \varphi))$.
For the formal semantics of LTL, we refer the reader to the literature~\cite{baier2008principles}.

    Both assumptions and guarantees within a contract can be expressed as LTL formulas. In this case, a contract becomes a pair of LTL specifications, one for the assumptions and one for the guarantees.

\subsection{Reactive Synthesis}
\label{sec:reactive_synthesis}

Reactive synthesis is the process of automatically generating a controller, modeled as a finite state machine, from an LTL specification that distinguishes between inputs and outputs. When a controller, satisfying the specification for all possible inputs, is synthesizable, it guarantees the desired system behavior.

Given an LTL formula $\varphi$ over the atomic propositions $\mathcal{AP}$, partitioned into inputs $\mathcal{I}$ and outputs $\mathcal{O}$ such that $\mathcal{AP} = \mathcal{I} \cup \mathcal{O}$, the synthesis problem is to find a finite-state machine $\Machine$, such as a Mealy machine, that \textit{realizes} $\varphi$.

A Mealy machine $\Machine$ is defined as a tuple $(S, \mathcal{I}, \mathcal{O}, s_0, \delta)$, where $S$ is a set of states, $s_0 \in S$ is the initial state, and $\delta: S \times 2^\mathcal{I} \rightarrow S \times 2^\mathcal{O}$ is the transition function. A word $w = (w_0^{\mathcal{I}}, w_0^{\mathcal{O}})(w_1^{\mathcal{I}}, w_1^{\mathcal{O}})(w_2^{\mathcal{I}}, w_2^{\mathcal{O}})... \in (2^{\mathcal{I}} \times 2^{\mathcal{O}})^\omega$ is a \textit{trace} of $\Machine$ if there is a corresponding run $\tau = \tau_0\tau_1\tau_2... \in S^\omega$ with $\tau_0 = s_0$, and for every $i \in \mathbb{N}$, $(\tau_{i+1}, w_i^{\mathcal{O}}) = \delta(\tau_i, w_i^{\mathcal{I}})$. 
$\Machine$ satisfies $\varphi$ if all traces of $\Machine$ satisfy $\varphi$.

Connecting reactive synthesis with contracts, we say that a Mealy machine, or \textit{component}, $\Machine$ is a valid \emph{implementation} of a contract $\mathcal{C}$ if it shares the same variables with $\mathcal{C}$ and all behaviors of $\Machine$ satisfy the guarantees $G$ when constrained by the assumptions $A$. If $\mathcal{C}=(\varphi_A, \varphi_G)$ is a contract of LTL specifications, then we say that $\Machine $satisfies $\mathcal{C} $, denoted $\Machine \models \mathcal{C}$, if $\Machine $satisfies $\varphi_A \rightarrow \varphi_G$. We say that $M$ \textit{realizes} $\mathcal{C}$ if it realizes the LTL formula $\varphi_A \rightarrow \varphi_G$.

    \section{Problem Definition}
    \label{sec:problem}
    
    Robotic missions are designed with the expectation that robots achieve specified objectives within various environmental conditions. These objectives are articulated through \textit{goals}, each underpinned by a \textit{contract}. We regard time as a discrete set of steps, collectively referred to as a \textit{time frame}.
    
    The mission's framework comprises \textit{behaviors} tied to environmental states or \textit{mission contexts}, and the robot's objectives are encapsulated within \textit{goals}.
    
    \begin{definition}[Behavior]
        A \textit{behavior} is an assignment of values to all system (robot and environment) variables at each step $i$ for all $i \in \mathbb{N}$, where $\mathbb{N}$ denotes the set of natural numbers.
    \end{definition}
    
    A context in the mission specification is a dynamic Boolean condition that is evaluated over time. It represents a specific environmental condition that can change as time progresses (for example, "it is raining" or "it is daytime").

    \begin{definition}[Mission Context]
    A \textit{mission context} is a Boolean function over time that characterizes environmental conditions, defined as a mapping $x: \mathbb{N} \rightarrow \{\textit{true}, \textit{false}\}$, with each $x_i$ evaluated at every step to depict the environment's state at that moment.
    \end{definition}
    
    \textbf{Assumptions:}
    \begin{itemize}
    \item The contexts in $X$ are mutually exclusive, ensuring $\forall x_i, x_j \in X$ with $i \neq j$, $x_i$ being \textit{true} at any step implies $x_j$ is \textit{false}.
    \item Each context $x_i \in X$ is \textit{true} infinitely often, which can be denoted by $\mathbf{G}(\mathbf{F}(x_i))$.
    \item A context remains \textit{true} for at least a duration of $t_{\text{context}}$ consecutive steps.
    \end{itemize}
    
    At any point during the mission, one context is \textbf{active}, representing the prevailing environmental condition. While multiple contexts will be active over the mission's course, they never overlap. A \textbf{context switch} is an instantaneous, uncontrollable shift from one active context to another. A fairness assumption is made that each context will activate repeatedly over the course of the mission.
    
    Behaviors are formalized over sets of atomic propositions $\mathcal{I}$ and $\mathcal{O}$, signifying the robot's inputs and outputs, respectively.
    
    \begin{definition}[Robot Behavior]
        A robot behavior $w$ is a sequence $w = (w_0^{\mathcal{I}}, w_0^{\mathcal{O}}), (w_1^{\mathcal{I}}, w_1^{\mathcal{O}}), \ldots$ in $(2^{\mathcal{I}} \times 2^{\mathcal{O}})^\omega$. The pair $w_t = (w_t^{\mathcal{I}}, w_t^{\mathcal{O}})$ represents the input and output at step $t$.
    \end{definition}
    
    \begin{definition}[Projected Robot Behavior]
        The \textit{projected robot behavior} $R(H, w)$ is obtained by filtering the robot behavior $w$ to only include steps in the set $H \subseteq \mathbb{N}$, resulting in a sequence where $t_0, t_1, \ldots \in H$ and $t_0 < t_1 < \ldots$.
    \end{definition}
    
    \begin{definition}[LTL Contract]
        An LTL contract $\mathcal{C}$ is a pair $(\varphi_A, \varphi_G)$ of LTL specifications, with $\varphi_A$ embodying environmental assumptions and $\varphi_G$ specifying the robot's guaranteed behavior when the assumptions $\varphi_A$ are met.
    \end{definition}
    
    \begin{definition}[Goal Context]
        A \textit{goal context} $ctx$ is a static Boolean formula delineating the subset of environmental conditions under which a goal is pertinent.
    \end{definition}

    \begin{definition}[Goal]
        A goal $\mathcal{G}$ is a tuple $(ctx, \mathcal{C})$, where $ctx$ is a goal context signifying the conditions under which the goal is active, and $\mathcal{C}$ is a contract. The set of all goals in a mission is denoted as $G$.
    \end{definition}

    A goal context is a part of the goal's definition and does not change over time. Instead, it is evaluated against the current state of the environment (the active context) to determine if the goal should be pursued at that moment.
    
    A goal context $\text{ctx}$ is \textit{compatible} with a mission context $x_i$ if the static Boolean formula of the goal context and the dynamic Boolean formula of the mission context are satisfiable in conjunction. This compatibility ensures that the goal is relevant and applicable when the specific mission context $x_i$ is active.    
        
    The mission specification, together with the delineation of mission contexts and goal contexts, lays the foundation for producing robot behaviors that fulfill the mission objectives within the constraints of each goal's contract, thereby ensuring that the robot's actions are congruent with the required goals and environmental conditions.

\begin{remark}[Operations on Goals]

    Operations among contracts, such as refinement, conjunction, and composition, are naturally extended to goals, as each goal encompasses a contract. For two goals $\mathcal{G}_1 = (ctx_1, \mathcal{C}_1)$ and $\mathcal{G}_2 = (ctx_2, \mathcal{C}_2)$, with their contracts $\mathcal{C}_1 = (\varphi_{A1}, \varphi_{G1})$ and $\mathcal{C}_2 = (\varphi_{A2}, \varphi_{G2})$, the operations are defined as follows:
    
    \begin{itemize}
        \item \textbf{Refinement:} Goal $\mathcal{G}_1$ refines goal $\mathcal{G}_2$, denoted as $\mathcal{G}_1 \preceq \mathcal{G}_2$, if and only if $\mathcal{C}_1 \preceq \mathcal{C}_2$ (i.e., $\varphi_{A1} \rightarrow \varphi_{A2}$ and $\varphi_{G2} \rightarrow \varphi_{G1}$) and the context $ctx_1$ implies $ctx_2$, formally $ctx_1 \rightarrow ctx_2$.
        
        \item \textbf{Conjunction:} The conjunction of two goals $\mathcal{G}_1 \land \mathcal{G}_2$ results in a goal $\mathcal{G} = (ctx, \mathcal{C})$, where $ctx = ctx_1 \lor ctx_2$ and $\mathcal{C} = \mathcal{C}_1 \land \mathcal{C}_2$ (i.e., $\varphi_A = \varphi_{A1} \lor \varphi_{A2}$ and $\varphi_G = \varphi_{G1} \land \varphi_{G2}$). The combined context $ctx$ represents the union of conditions under which either $\mathcal{G}_1$ or $\mathcal{G}_2$ must hold.
        
        \item \textbf{Composition:} The composition of two goals $\mathcal{G}_1 \parallel \mathcal{G}_2$ yields a goal $\mathcal{G} = (ctx, \mathcal{C})$, where $ctx = ctx_1 \land ctx_2$ and $\mathcal{C} = \mathcal{C}_1 \parallel \mathcal{C}_2$. Here, the intersection of $\varphi_A$ and $\varphi_G$ is taken as in Equations~\ref{eq:sat_composition_A} and~\ref{eq:sat_composition_G}, and the combined context $ctx$ signifies that both conditions $ctx_1$ and $ctx_2$ must be satisfied for $\mathcal{G}$ to be considered active.
    \end{itemize}
    
    In each case, the contexts of the goals are combined according to the logical operations relevant to the contract operation being applied. This ensures that the resulting goal's context and contract are consistent with the semantics of the goal operations.

    \end{remark}

\begin{definition}[Mission]
    A \emph{contextual mission} $\mathcal{M} = (X, G)$, or simply a \emph{mission}, is a pair containing the set of all mission contexts and goals.
    \end{definition}
    
    Given a mission $\mathcal{M} = (X, G)$, we aim to define, for every mission context $x_i \in X$, the set of behaviors that the robot must satisfy when $x_i$ is active. These behaviors are generated by a combination of contracts from goals that are active when $x_i$ is active. We refer to these as \textit{mission scenarios}.
    
    \begin{definition}[Mission Scenario]
    A \emph{mission scenario} is a pair $\mathcal{M}_i = (x_i, \gamma_i)$ where $x_i \in X$ and $\gamma_i$ is an LTL contract.
    \end{definition}
    A mission scenario specifies, for every mutually exclusive context $x_i \in X$, the set of behaviors that must be satisfied by the robot. A mission scenario is \textit{active} when its context is active.

\begin{definition}[Active Signal]
    We define a Boolean condition $\mathcal{A} \in \mathcal{O}$ which we call the \textit{active signal}. This is an output signal controlled by the robot that can be either \textit{true} or \textit{false} at each time step.
\end{definition}

\noindent \textbf{Assumptions:} 
\begin{itemize}
    \item $\mathcal{A}$ becomes \textit{false} at every context-switch and remains \textit{false} for at most $t_{\texttt{trans}}$ time units, where $t_{\texttt{trans}} \ll t_{\texttt{context}}$.
\end{itemize}

\begin{definition}[Indexing Function]
    We define a function $I: X \times \mathcal{O} \rightarrow 2^{\mathbb{N}}$ that takes as input a context $x_i \in X$ and the active signal $\mathcal{A} \in \mathcal{O}$ and returns the set of natural numbers representing the time steps when both $x_i$ and $\mathcal{A}$ are \textit{true}.
\end{definition}

\vspace*{2mm}
Given a mission $\mathcal{M} = \{X, G\}$, we formulate two problems:

\vspace*{2mm}
\noindent{\textbf{Problem 1:}} For each context $x_i \in X$, produce a mission scenario $\mathcal{M}_i = (x_i, \gamma_i)$.

\vspace*{2mm}
\noindent{\textbf{Problem 2:}} For each mission scenario $\mathcal{M}_i = (x_i, \gamma_i)$, produce a robot behavior $w = w_0, w_1, w_2, \ldots$ such that
\begin{equation}
R(I(x_i, \mathcal{A}), w) \models \gamma_i
\end{equation}

Problem 1 seeks to specify the behaviors that the robot must adhere to for each context, whereas Problem 2 aims to generate a robot behavior that satisfies the mission scenario by utilizing the indexing function $I(x_i, \mathcal{A})$ to identify the time steps when the context and active signal are concurrently true.

\begin{figure}[t]
    \centering
    \includegraphics[width=.9\columnwidth]{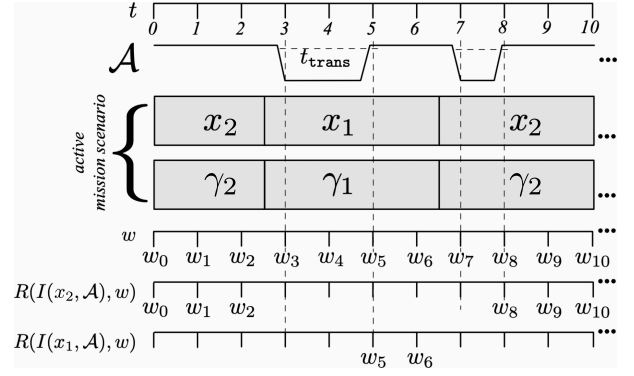}
    \caption{Example of two contextual tasks active at different points in time.}
    \label{fig:timeline-problem}
\end{figure}  
  
\noindent Figure~\ref{fig:timeline-problem} presents an example of active contexts and mission scenarios from $t=0$ to $t=10$, where the overall robot behavior is denoted by $w$ and its projected robot behaviors $R(I(x_1, \mathcal{A}), w)$ and 
$R(I(x_2, \mathcal{A}), w)$ 
are depicted. For the robotic mission to be satisfied, the following conditions must be met:
\begin{itemize}
\item $R(I(x_1, \mathcal{A}), w) = w_5, w_6, \dots \models \gamma_1$
\item 
$R(I(x_2, \mathcal{A}), w) = w_0, w_1, w_2, w_8, w_9, w_{10}, \dots \models \gamma_2$
\end{itemize}

A contextual mission is satisfied if the robot exhibits behavior in accordance with the behaviors $\gamma_i$ of each mission scenario $\mathcal{M}_i = (x_i, \gamma_i)$ when $x_i$ is active. Should a scenario become active after being inactive, the robot is expected to resume from where it left off the last time the scenario was active. In Figure~\ref{fig:timeline-problem}, 
$\mathcal{M}_2=(x_2, \gamma_2)$ 
is the active scenario at step $t=0$. At step $t=3$, the context switches from 
$x_2$ 
to $x_1$, resulting in a different scenario $\mathcal{M}_1$ becoming active. When scenario $\mathcal{M}_3$ becomes active again at step $t=7$, the robot must resume satisfying the behavior $\gamma_3$ from the point it left off at time $t=2$, before the first context switch. We allow for a maximum timeframe $t_{\texttt{trans}}$ following a context switch for the robot to transition to the new scenario.

\begin{remark}[Clarification on Contextual Satisfaction]
Assigning a \texttt{true} value to a goal context $ctx$ does not imply universal satisfaction of the goal contract across the entire timeline. Instead, it indicates that the goal specification aligns with and must be evaluated within all the `\textit{contextual windows}' where the goal has compatible contexts (encompassing all windows in the case of a \texttt{true} context).

Consider a goal $\mathcal{G}=(ctx, \mathcal{C})$. If the context $ctx$ of $\mathcal{G}$ is the Boolean formula \texttt{true}, it signifies that $\mathcal{G}$ is compatible with all contexts. However, this compatibility does not necessitate a \textit{global satisfaction} of the specifications outlined by $\mathcal{C}$ throughout the entire timeline of events.

The fulfillment of $\mathcal{C}$ needs to be assessed by \textit{``stitching together''} segments of the global timeline where $\mathcal{G}$ holds, rather than assuming blanket fulfillment across the entire timeline. This emphasizes a significant distinction from the concept of being \textit{globally true} in LTL, highlighting the contextual nature of goal satisfaction within our framework.
\end{remark}

    The following sections detail how \NAME solves the problems presented. In Section~\ref{sec:cgg}, we address Problem 1, while Section~\ref{sec:orchestration} tackles Problem 2.

 \section{The Contract-Based Goal Graph (CGG)}
 \label{sec:cgg}

 The Contract-Based Goal Graph (CGG) is a structured representation of goals within a robotic mission, organized hierarchically to reflect the relationships and dependencies between different objectives. In this framework, goals are treated as nodes within a graph, and the connections between them are indicative of how they relate to or influence one another. The CGG is specifically designed to facilitate the analysis and synthesis of goal-oriented behavior in complex systems, such as robots operating within dynamic and uncertain environments. See below a detailed explanation.

A CGG is formally defined as a directed graph $T = (\Upsilon, \Xi)$, where each node $\upsilon \in \Upsilon$ represents a goal and each directed edge $\xi \in \Xi$ signifies a relationship between two goals, which can be of three types: refinement, composition, or conjunction. Refinement represents a hierarchical relationship where a goal is broken down into sub-goals. Composition is used to merge goals into a single, more complex goal that captures the behavior necessary to satisfy all individual goals simultaneously. Conjunction links goals that are to be achieved in parallel, indicating that the robot should satisfy all connected goals concurrently, under the assumption that their contexts do not overlap.

The context-based specification clustering algorithm~\cite{mallozzi2020crome} plays a crucial role in addressing Problem 1 (as outlined in Section~\ref{sec:problem}). This algorithm operates as follows:

\begin{enumerate}
 \item \textit{Identify Exclusive Contexts:} The first step is to delineate the mutually exclusive contexts within which the robot operates. These contexts form the basis for clustering goals.
 \item \textit{Cluster Goals by Compatibility:} Goals are then grouped into clusters based on their compatibility with each exclusive context. A goal is considered compatible with a context if the conjunction of the goal's context and the cluster's mutually exclusive context is logically satisfiable.
 \item \textit{Combine Goals Within Clusters:} Within each cluster, goals are combined using the composition operation. This operation generates a new goal that reflects the behaviors needed to satisfy all goals in the cluster, given that the environment satisfies the combined assumptions.
 \item \textit{Connect Clusters:} The new goals produced from the composition within clusters are then connected using the conjunction operation. This step is based on the premise that goals with mutually exclusive contexts can be conjoined because their contexts do not occur simultaneously.
 \item \textit{Derive Mission Scenarios:} Each goal resulting from the conjunction operation corresponds to a unique mission scenario, reflecting a specific set of behaviors that the robot must exhibit when the associated context is active.
\end{enumerate}

This CGG structure ultimately helps in automating the decision-making process for the robot, ensuring that it can dynamically adjust its behavior to fulfill the mission's goals, adapting as different contexts become active or inactive over time. The CGG not only aids in the clear representation of complex missions but also serves as a foundation for algorithms that synthesize robot behaviors that are robust to changes in the environment.

\subsection{Example}

In our running example, we define four goals and two mission contexts, $X=\{\texttt{day}, \texttt{night}\}$. 
For clarity, we provide name, description, context, assumptions, and guarantees for each goal, denoted by $N, D, ctx, A, G$, respectively. We use $\mathcal{G}_i$ and $\mathcal{C}_i$, with $i \in \{1, \ldots, 4\}$, to denote a goal and its contract. For each contract $\mathcal{C}_i$, we use $\psi_i$ and $\phi_i$ to denote the assumption and guarantee formulas, respectively. 

We associate each region on the map with an atomic proposition in  $\{r_1, r_2, r_3, r_4, r_5\}$. When a region is visited, the corresponding atomic proposition is set to \textit{true}. The actions \textit{greet} and \textit{register} are represented by the atomic propositions $g$ and $s$, respectively. We use $p$ to denote the presence of a \textit{person}. The goal contexts $\texttt{day}$ and $\texttt{night}$ are defined such that $\texttt{day} = \overline{\texttt{night}}$ at all times, i.e., $\texttt{day}$ is the negation of $\texttt{night}$.

\begin{align}
  \mathcal{G}_{1} &=
    \begin{cases}
      N & day\_patrolling\\
      D & \text{Keep visiting regions $r_1$ and $r_2$ in order.}\\
      ctx & \texttt{day}\\
      & \mathcal{C}_{1}
      \begin{cases}
        \psi_1 & true\\
        \phi_1 & \textbf{OP}(r_1, r_2)
      \end{cases}\\
    \end{cases} \notag\\
  \mathcal{G}_{2} &=
    \begin{cases}
      N & night\_patrolling\\
      D & \text{Keep visiting regions $r_3$ and $r_4$ in order.}\\
      ctx & \texttt{night}\\
      & \mathcal{C}_{2}
      \begin{cases}
        \psi_2 & true\\
        \phi_2 & \textbf{OP}(r_3, r_4)
      \end{cases}\\
    \end{cases} \notag\\
  \mathcal{G}_{3} &=
    \begin{cases}
      N & day\_register\\
      D & \text{Register a person promptly after being detected.}\\
      ctx & \texttt{day}\\
      & 
\mathcal{C}_{3}
      \begin{cases}
        \psi_3 & \textbf{AE}(p)\\
        \phi_3 & \textbf{BD}(p, s)
      \end{cases}\\
    \end{cases} \notag \\
    \mathcal{G}_{4} &=
    \begin{cases}
      N & always\_greet\\
      D & \text{Greet a person immediately when detected.}\\
      ctx & true\\
      & \mathcal{C}_{4}
      \begin{cases}
        \psi_4 & \textbf{AE}(p)\\
        \phi_4 & \textbf{IR}(p, g)
      \end{cases}\\
    \end{cases} \notag
\end{align}

  \noindent The assumptions $\psi_1$ and $\psi_2$ are set to `true', indicating that these goals do not rely on specific environmental conditions and are always relevant during their respective contexts (\texttt{day} and \texttt{night}). This design choice reflects the ongoing nature of patrolling tasks regardless of additional environmental factors.

\noindent\textbf{AE} (\textbf{AlwaysEventually}) represents the LTL construct globally eventually ($\mathsf{GF}$). \textbf{OP} ($\textbf{OrderedPatrolling}$) is a robotic pattern that expresses the periodic visit of a set of locations in a given order. For instance, $\textbf{OP}(r_1, r_2)$ ensures that regions $r_1$ and $r_2$ are visited repeatedly in that order and it corresponds to the LTL formula:
\begin{align*}
    & \mathsf{G} \mathsf{F} (r_{1} ~~\land~~ \mathsf{F} r_{2}) ~~\land~~ (\overline{r_{2}} ~~\mathbin{\mathsf{U}}~~ r_{1}) ~~\land~~ \\
    & \land~~ \mathsf{G} (r_{2} \rightarrow \mathsf{X} (\overline{r_{2}} ~~\mathbin{\mathsf{U}}~~ r_{1})) ~~\land~~ \mathsf{G} (r_{1} \rightarrow \mathsf{X} (\overline{r_{1}} ~~\mathbin{\mathsf{U}}~~ r_{2}))
\end{align*}

Finally, $\textbf{BD}$ (\textbf{BoundDelay}) and $\textbf{IR}$ (\textbf{InstantaneousReaction}) are robotic patterns which require an action \textit{a} (i.e., setting a proposition to \textit{true}) based on the truth value of an atomic proposition $s$. They correspond to the LTL formulas $\mathsf{G}(s \leftrightarrow a)$ and $\mathsf{G}(s \rightarrow a)$, respectively. For more details and the complete list of robotic patterns, see Menghi et al.~\cite{patterns}. 

\begin{figure}
  \centering
  \includegraphics[width=0.8\columnwidth]{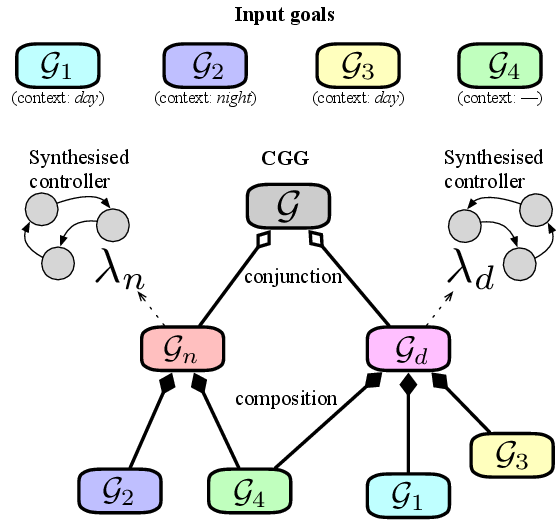}
  \caption{CGG created for the running example.}
  \label{fig:example-cgg-controllers}
 \end{figure}

Given $\mathcal{G}_1$, $\mathcal{G}_2$, $\mathcal{G}_3$, and $\mathcal{G}_4$, the context-based specification clustering algorithm of \NAME produces two mutually exclusive goal contexts, namely, \texttt{day} and \texttt{night}. The context $\texttt{day}$ is associated with the cluster $\{\mathcal{G}_1, \mathcal{G}_3, \mathcal{G}_4\}$, while $\texttt{night}$ is linked to  $\{\mathcal{G}_2, \mathcal{G}_4\}$. $\mathcal{G}_4$ is present in both clusters since its context (\textit{true}) is compatible with any goal context.

  \noindent Figure~\ref{fig:example-cgg-controllers} illustrates the CGG for our example, visually depicting the hierarchical structure and interrelationships between the goals and their contexts. This figure aids in understanding how different goals are clustered and combined based on their contexts.

    \noindent The context $\texttt{day}$ is associated with the cluster $\{\mathcal{G}_1, \mathcal{G}_3, \mathcal{G}_4\}$, while $\texttt{night}$ is linked to  $\{\mathcal{G}_2, \mathcal{G}_4\}$. $\mathcal{G}_4$ appears in both clusters because its context is universally applicable (\textit{true}), making it compatible with any other goal context. This compatibility illustrates how goals can be relevant in multiple mission contexts based on their defined conditions.
    
    \noindent The contracts $\gamma_d$ and $\gamma_n$ result from the composition of the individual goal contracts. They encapsulate the combined requirements for the robot's behavior in both \texttt{day} and \texttt{night} contexts, respectively. This composition ensures that the robot meets all necessary objectives under each context, while respecting the individual constraints of each goal.

    \begin{align}
      \gamma_{d} &=
          \begin{cases}
            \psi_d & \textbf{AE}(p) \lor \overline{\phi_d} \\
            \phi_d & \textbf{OP}(r_1, r_2) \land (\mathsf{GF}(p) \rightarrow \textbf{BD}(p, s))
          \end{cases}\notag
    \end{align}
    \begin{align}
      \gamma_{n} &=
          \begin{cases}
            \psi_n & \textbf{AE}(p) \lor \overline{\phi_n} \\
            \phi_n & \textbf{OP}(r_3, r_4) \land (\textbf{AE}(p) \rightarrow \textbf{IR}(p, g))
          \end{cases}\notag
    \end{align}

        \noindent Controllers $\lambda_d$ and $\lambda_n$ are synthesized to manage the robot's behavior for the \texttt{day} and \texttt{night} contexts, respectively. 
        However, these controllers are context-agnostic. To ensure mission success in a context-sensitive environment, the robot must dynamically switch between these controllers in response to changes in the active context. When the mission context transitions from $\texttt{day}$ to $\texttt{night}$, the robot should switch from $\lambda_d$ to $\lambda_n$ to align its behavior with the new set of active goals. Conversely, when the context reverts to $\texttt{day}$, the robot should switch back to $\lambda_d$, adapting its actions to the reactivated goals of the $\texttt{day}$ context. 
        
        The following sections detail a method to produce such adaptable robot behaviors, ensuring mission satisfaction even in presence of context changes.

\section{Controller Orchestration}
\label{sec:orchestration}

\NAME orchestrates transitions between task controllers to align with the dynamically changing mission scenarios. This orchestration involves task controllers, synthesized for specific mission scenarios, and transition controllers, which facilitate switching between task controllers in response to changes in mission context.

Each controller, modeled as a Mealy machine, operates within designated segments of the robot's state, input, and output space. The orchestration's goal is to maintain continual alignment with the mission's contextual requirements.

\subsection{Task and Transition Controllers}

\begin{figure*}[]
    \centering
    \includegraphics[width=0.6\linewidth]{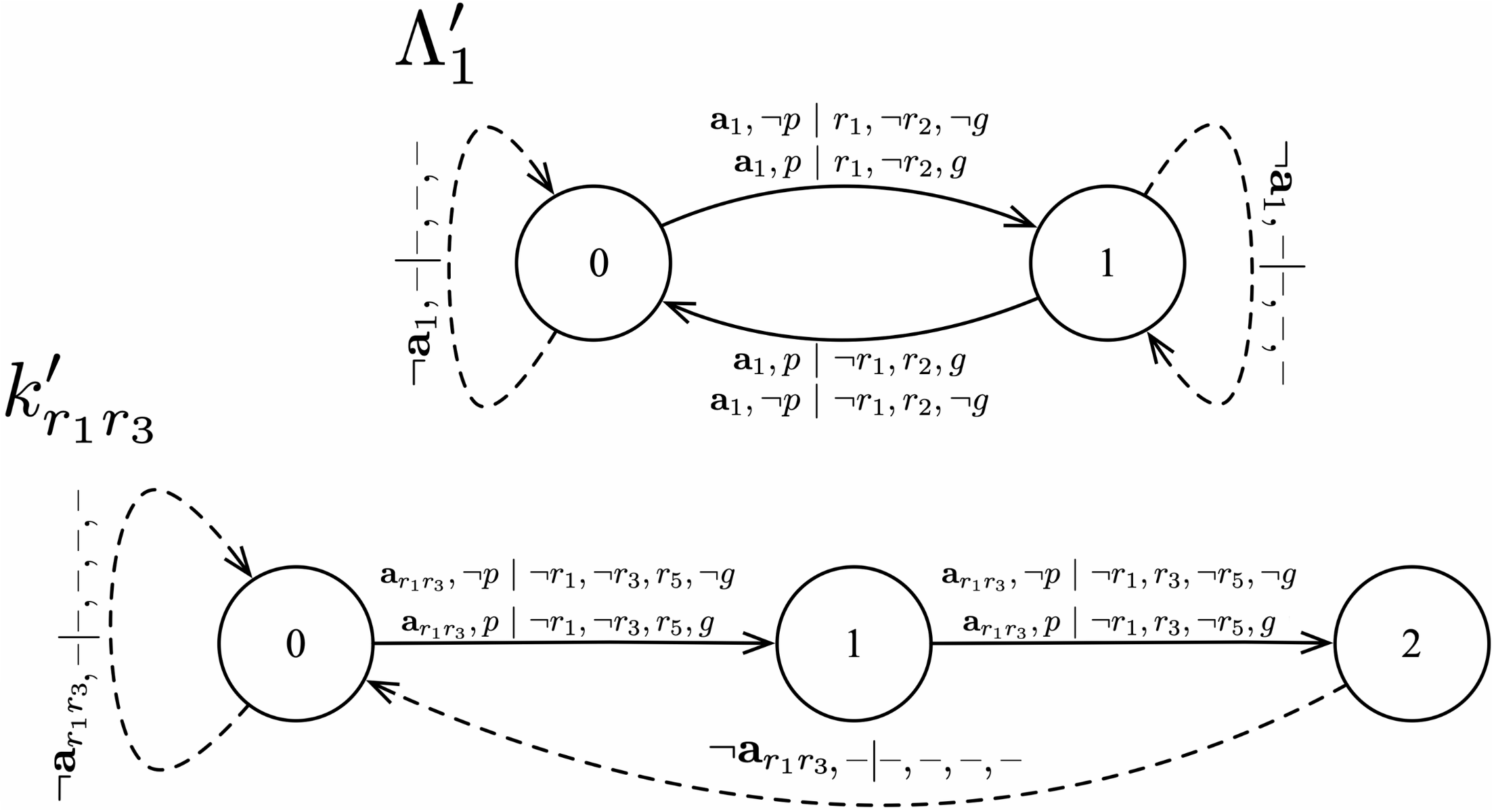}
    \caption{Depiction of task and transition controllers, incorporating activation inputs.} 
    \label{fig:active-controllers}
\end{figure*}
  
  \begin{definition}[Task Controllers]
  For each mission scenario $\mathcal{M}_i = (x_i, \gamma_i)$, a corresponding \textit{task controller} $\lambda_i$ is synthesized to realize the LTL contract $\gamma_i$. The set of all task controllers is denoted as $\Lambda = \{\lambda_1, \lambda_2, \ldots, \lambda_n \}$, where each $\lambda_i$ is tailored to its specific mission scenario.
  \end{definition}

  Transition controllers manage the robot's shift between these mission scenarios.
  
  \begin{definition}[Transition Controllers]
  A \textit{transition controller} $k_{r_i, r_j}$ is a Mealy machine engineered to facilitate efficient movement from region $r_i$ to region $r_j$. These controllers aim to minimize transition time, with $r_i, r_j \in R$ representing distinct regions ($i \neq j$). The complete set of transition controllers, $K$, encompasses all possible transitions, i.e., $K = \{k_{r_ir_j}: r_i, r_j \in R, i \neq j \}$.
  \end{definition}

  For example, in the event of a context switch from day to night, the robot must transition from patrolling regions $\{r_1, r_2\}$ to $\{r_3, r_4\}$. This is managed by the appropriate transition controller.
  
  \begin{definition}[Transition Time]
  The \textit{transition time} $t_{\texttt{trans}}$ is the longest time, in steps, required by any transition controller $k_{r_i, r_j} \in K$ to complete its transition from region $r_i$ to $r_j$.
  \end{definition}

  The orchestration system effectively synchronizes task and transition controllers, ensuring the robot's alignment with the mission scenarios, thereby maintaining continuous satisfaction of the mission's contextual requirements. Further details on this synchronization process are elaborated in the following sections. 
  
\subsubsection{\textbf{Active Controllers}}

\NAME introduces the notion of \textit{active} and \textit{inactive} controllers to modify task and transition controllers and facilitate their coordination. Controllers are finite state machines, specifically Mealy machines.
Two new sets of atomic propositions are defined:
\begin{align*}
    \mathcal{I}_\Lambda^\textbf{a}&=\{\textbf{a}_1, \textbf{a}_2,...,\textbf{a}_n\}\\
    \mathcal{I}_K^\textbf{a} &=\{\textbf{a}_{r_ir_j}: r_i,r_j \in R, i \neq j\}
\end{align*}
These sets of propositions are used as additional inputs for the controllers and are referred to as \textit{activation inputs}.

For a task controller $\lambda_i \in \Lambda$, an atomic proposition $\textbf{a}_i \in \mathcal{I}_\Lambda^\textbf{a}$ is added to its inputs. A self-loop transition is added to any state of $\lambda_i$, which accepts the negation of $\textbf{a}_i$ as input and leaves the output undetermined (\textendash).
For any transition controller $k_{r_ir_j} \in K$, an atomic proposition $\textbf{a}_{r_ir_j}$ is added to its inputs. Two loop transitions are further added, one to its initial state and one from its terminal state to its start state. Both loop transitions accept the negation of $\textbf{a}_{r_ir_j}$ as input for any other input of $k_{r_ir_j}$ and leave the output to be undetermined (\textendash).
The modified task and transition controller sets are denoted by $\Lambda'=\{\lambda_1', \lambda_2',...,\lambda_n'\}$ and $K'=\{k'_{r_ir_j}: r_i,r_j \in R, i \neq j\}$, respectively.

A controller $\lambda_i' \in \Lambda'$ or $k'_{r_ir_j} \in K'$ is considered \textit{active} if its activation input is set to \textit{true}, and \textit{inactive} otherwise. Figure~\ref{fig:active-controllers} illustrates two modified controllers. The task controller $\Lambda_1$ has inputs $\textbf{a}_1$, $p$, outputs $r_1$, $r_2$, $g$, start state $0$, and no terminal state. The transition controller $k_{r_1r_3}$ has inputs $\textbf{a}_{r_1r_3}$, $p$, outputs $r_1$, $r_3$, $r_5$, $g$, start state $0$, and terminal state $3$. \NAME adds the \textit{activation inputs} $\textbf{a}_1$ and $\textbf{a}_{r_1r_3}$ and the transitions represented by dashed lines in Figure~\ref{fig:active-controllers}. 

    We say that two Mealy machines are \textit{trace equivalent} if, for every possible sequence of inputs, they produce the same sequence of outputs. In the context of \NAME, this equivalence implies that the modified controllers ($\lambda_i' \in \Lambda'$ or $k_{r_ir_j}' \in K'$) produce the same behavior as their original counterparts ($\lambda_i \in \Lambda$ or $k_{r_ir_j} \in K$) when the modified controllers are active. Specifically, being \textit{active} means that their respective activation inputs are set to true. This ensures that the original and modified controllers are functionally indistinguishable during their active states, despite the structural modifications introduced by the activation inputs.

    Thus, the traces produced by $\lambda_i' \in \Lambda'$ are equivalent to the traces produced by $\lambda_i \in \Lambda$ when $\lambda_i'$ is \textit{active}. Similarly, the traces produced by $k_{r_ir_j}' \in K'$ are equivalent to those produced by $k_{r_ir_j} \in K$ when $k_{r_ir_j}'$ is \textit{active}. If the activation inputs of $\Lambda'$ and $K'$ are always \textit{true}, then their behaviors are equivalent to those of $\Lambda$ and $K$.
    
\subsubsection{\textbf{Composition}}
Given a set of controllers $\Lambda^*= \Lambda' \cup K'$ where each element $\lambda^i \in \Lambda^*$ is a finite state machine $\lambda^i=(S^i, \mathcal{I}^i, \mathcal{O}^i, s_0^i, \delta^i)$, $N = |\Lambda^*|$, and  where $\textit{in}^i_t$ and $\textit{out}^i_t$ are respectively the inputs and outputs of $\lambda^i$ at time $t$. We indicate with subscripts $s^i_0$, $s^i_t$ and $s^i_{t+1}$ the state of $\lambda^i$ at time $0$, $t$ and $t+1$, respectively. We define the composition of all the elements in $\Lambda^*$ as a finite-state machine $\Lambda^{||}=(S^{||}, \mathcal{I}^{||}, \mathcal{O}^{||}, s_0^{||}, \delta^{||})$ where:
\begin{align*}
    S^{||} &= S^1 \times S^2 \times ... \times S^N\\
    \mathcal{I}^{||} &= \mathcal{I}^1 \times \mathcal{I}^2 \times ... \times \mathcal{I}^N\\
    \mathcal{O}^{||} &= \mathcal{O}^1 \times \mathcal{O}^2 \times ... \times \mathcal{O}^N\\
    s^{||}_0 &= (s_0^1, s_0^2,...,s_0^N)\\
    \delta^{||} &:S^{||} \times \mathcal{I}^{||} \rightarrow S^{||} \times \mathcal{O}^{||}
\end{align*}

At each step, the transition function $\delta^{||}$ produces a new state and output as follows:
\begin{align*}
    &~((s^1_{t+1}, s^2_{t+1}, ..., s^N_{t+1}), (\textit{out}^1_{t}, \textit{out}^2_{t}, ..., \textit{out}^N_{t})) \\
    =&~\delta^{||}((s^1_{t}, s^2_{t}, ..., s^N_{t}), (\textit{in}^1_{t}, \textit{in}^2_{t}, ..., \textit{in}^N_{t}))
\end{align*}
where $(s^i_{t+1}, \textit{out}^i_{t}) = \delta^i(s^i_{t+1}, \textit{in}^i_{t})$ for $i=\{1, \ldots, N\}$. 
The composition is well-defined, i.e., there are no conflicting outputs, 
by construction, since each Mealy machine relates to a different mission scenario, and there is only one active scenario at each time.

\subsection{The Orchestration System}

The orchestration system is responsible for monitoring the active context $x_i \in X$ and coordinating the behavior of the finite state machine $\Lambda^{||}$ to ensure mission fulfillment. This system manipulates the activation inputs within $\mathcal{I}^{||}$, thereby controlling $\Lambda^{||}$'s outputs. If any output $o_t \in \mathcal{O}^{||}$ is undefined by any machine $\lambda_i \in \Lambda^*$ at time $t \in \mathbb{N}$, it defaults to $o_t = \textit{false}$.

We model the orchestration system using a network of timed automata, which are finite state machines extended with clock variables. Our model employs discrete-time ticks. Each component within the orchestration system is represented as a template automaton, as depicted in Figure~\ref{fig:automata}. The transitions in these automata can be labeled with synchronization channels, guards, or updates. Updates may involve simple assignments or complex functions, denoted as $\text{name} := \text{value}$, distinct from predicates ($\text{name} = \text{value}$). State invariants, shown next to states in Figure~\ref{fig:automata}, must be satisfied for the automaton to remain in that state.

\begin{figure}[]
  \centering
  \includegraphics[width=1\linewidth]{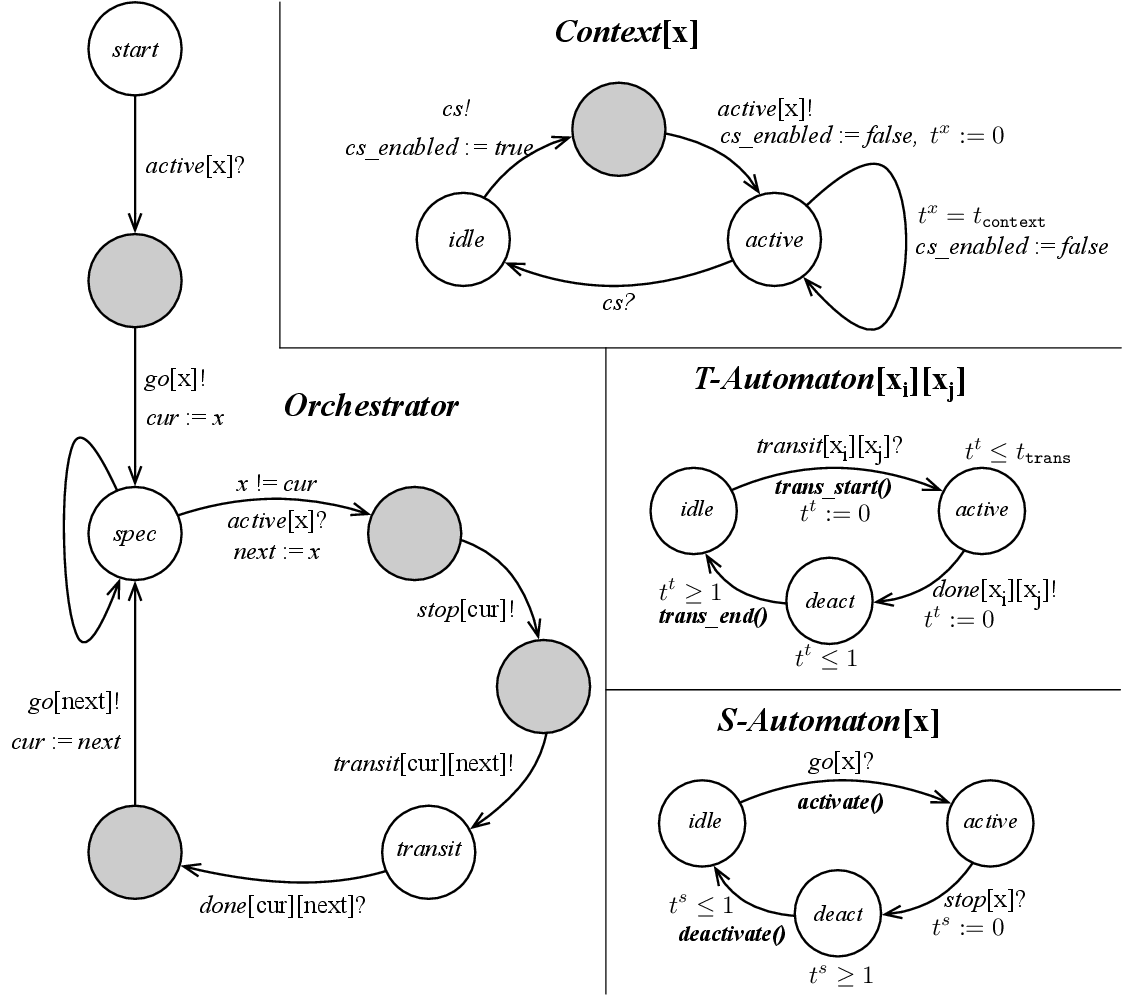}
  \caption{Timed automata modeling interactions among the orchestrator, contexts, and controllers. Gray states indicate \textit{committed locations}.} 
  \label{fig:automata}
\end{figure}

\subsubsection{Template Instances}

There are $n$ instances of the \textit{Context} template, one for each context. Furthermore, we have $(n^2 - n)$ instances of the \textit{T-Automaton} template, each representing a \textit{transition controller} from context $x_i$ to $x_j$ where $i \neq j$. For each \textit{task controller}, there is an associated instance of the \textit{S-Controller} template, with one \textit{S-Controller} for each \textit{Context} instance. The \textit{Orchestrator} automaton coordinates the states of all the \textit{Context}, \textit{S-Controller}, and \textit{T-Automaton} instances. We use the abbreviations $C[i]$, $S[i]$, and $T[i][j]$ for instances of contexts, specification and transition automata, respectively, and $O$ for the orchestration automaton.

\subsubsection{Synchronization Channels}

The automata synchronize via binary synchronization channels. Given $n$ contexts, the following channels are defined:
\begin{itemize}
\item $n$ $active$ channels, one for each context, activated when the corresponding context becomes active.
\item $n$ $go$ and $n$ $stop$ channels, to activate and deactivate the task automata, respectively.
\item $(n^2 - n)$ $transit$ and $(n^2 - n)$ $done$ channels, to start each transition automaton and signal its completion.
\item A \textit{broadcast channel} $cs$, where one sender ($cs!$) synchronizes with multiple receivers ($cs?$). It communicates context switches to all contexts.
\end{itemize}

\subsubsection{States}

\textit{Committed locations}, shown in gray in Figure~\ref{fig:automata}, are states where immediate transitions occur, prohibiting time delays. Thus, transitions from or to committed states are considered atomic.

Initially, all automata are in the \textit{idle} state. At any given moment, context, transition, and task automata can be in either \textit{active} or \textit{idle} states, while the orchestrator can be in the \textit{spec} or \textit{transit} states.

\subsubsection{Local Clock Variables}

Contexts and transition automata each have a local clock variable, $t^x$ and $t^t$ respectively. Contexts use $t^x$ to measure the minimum required active duration, while transition automata use $t^t$ to limit their maximum active time. These clocks are reset when the automata enter their \textit{active} states.

\subsubsection{Synchronization with Mealy Machines}

Timed automata and Mealy machines serve distinct roles in the system. The former models dynamic behavior and time, while the latter determines the robot's actions without specific time constraints. To synchronize these components, we discretize time into fixed units $t$. At each time unit, timed automata transitions occur based on their guards, and concurrently, all Mealy machines in $\Lambda^{||}$ react according to their inputs. Most Mealy machines will be inactive, with only one transition or task controller active at any given time. This activation is governed by the timed automata, which manage the activation inputs in $\mathcal{I}^{||}$.

The functions \textit{\textbf{trans\_start()}}, \textit{\textbf{trans\_end()}}, \textit{\textbf{activate()}}, and \textit{\textbf{deactivate()}} are invoked when transition or task automata move from \textit{idle} to \textit{active}. These functions handle the activation inputs of the Mealy machines in $\Lambda^{||}$, influencing their behavior. Specifically, for any task automaton $S[i]$ and transition automaton $T[i][j]$:
\begin{itemize}
    \item $activate()$ sets the activation input of the corresponding Mealy machine $\lambda_i'$ to true and marks $Active \in \mathcal{O}$ as true.
    \item $deactivate()$ sets the activation input of $\lambda_i'$ to false and marks $Active$ as false.
    \item $trans\_start()$ activates the appropriate Mealy machine $k'_{r_ir_j} \in K'$ based on the robot's current and destination regions.
    \item $trans\_end()$ deactivates the Mealy machine $k'_{r_ir_j} \in K'$.
\end{itemize}

\begin{remark}
\label{rem:active}
$Active \in \mathcal{O}$ is \textit{true} if and only if any task automaton is in the \textit{active} state.
\end{remark}

\subsection{Proof of Correctness}
We prove that our system generates traces that always satisfy the contextual mission as defined in Section~\ref{sec:problem}. This proof is constructed incrementally, focusing on the following properties of the orchestration system:

\begin{itemize}
\item Initially, all automata are in the \textit{idle} state, awaiting an active context.
\item After the first context activation, exactly one context remains active at any given time.
\item A context stays active for at least $t_{\texttt{context}}$ time units.
\item No context can become active in two consecutive time units.
\item Post-initial context activation, either a task or a transition automaton is active, but never both simultaneously.
\item A transition automaton's active period is capped at $t_{\texttt{trans}}$ time units.
\item Following a context activation within $t_{\texttt{trans}}$ time units, the associated task automaton becomes active.
\end{itemize}

We assume all automata start in the \textit{idle} state.

\begin{lemma}
\label{lem:contextfirst}
Initially, only a context automaton can transition out of \textit{idle}, and precisely one context can transition to the \textit{active} state.
\end{lemma}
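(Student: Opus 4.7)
The plan is to proceed by a case analysis over the template automata described in Figure~\ref{fig:automata}, inspecting the edges leaving each template's \emph{idle} state and arguing which of them are enabled under the initial configuration. Since the lemma is a purely structural claim about the first transition in a run, no inductive argument on time is needed; the whole reasoning is local to the idle configuration.

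First, I would argue that the non-context templates are blocked. A $\textit{T-Automaton}$ instance $T[i][j]$ leaves \emph{idle} only via an edge synchronized with $transit?$, and an $\textit{S-Controller}$ instance $S[i]$ leaves \emph{idle} only via an edge synchronized with $go?$. The $\textit{Orchestrator}$ $O$ in \emph{idle} only reacts to the $active?$ synchronization. Inspecting the templates, $transit!$ and $go!$ are emitted exclusively by the $\textit{Orchestrator}$ from its \emph{spec}/\emph{transit} states (not from \emph{idle}), and $active!$ is emitted exclusively by a $\textit{Context}$ instance. Hence, from the all-idle configuration none of $T[i][j]$, $S[i]$, $O$ has a partner available for its outgoing synchronization, so no edge of theirs is enabled. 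Because their only outgoing edges from \emph{idle} are synchronized edges, they cannot fire unilaterally either.

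Next, I would consider the $\textit{Context}$ template. Its idle-to-active edge is guarded by the emission of $active!$ together with the mutual-exclusion assumption on contexts stated in Section~\ref{sec:problem}: at any time at most one $x_i \in X$ is \emph{true}. Under the initial condition the external environment picks which context holds; since the contexts are mutually exclusive, at most one $C[i]$ can have its guard satisfied, so at most one may fire. The fairness assumption $\mathsf{G}\mathsf{F}(x_i)$ together with the initial condition guarantees that some context eventually becomes true, and because time must progress and only context edges are enabled, a context edge must eventually be taken. Combining these, exactly one $C[i]$ takes its $idle \to active$ transition. The committed location reached after $active!$ then forces the subsequent synchronization with $O$ to be atomic, so no other automaton sneaks in a transition in between.

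The delicate step I expect to be the main obstacle is the treatment of the broadcast channel $cs$ and of committed locations in the first step: one must verify that no $cs?$ receiver is spuriously enabled in \emph{idle} (inspection of the templates shows $cs?$ edges emanate only from \emph{active} context states, so this is vacuous at time zero), and that the committed state traversed by the firing context does not artificially enable some other idle automaton before the $active!/active?$ handshake completes. Once these two structural checks are performed, the lemma follows directly from the enabledness analysis above.
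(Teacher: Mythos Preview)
Your case analysis for the non-context automata is sound and, if anything, cleaner than what the paper does: inspecting the outgoing edges of $T[i][j]$, $S[i]$ and $O$ in \emph{idle} and observing that none of their required synchronization partners is available is exactly the right local argument, and it avoids the somewhat artificial induction on the number of contexts that the paper uses for this part.

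The genuine gap is in your treatment of why \emph{precisely one} context can fire. You appeal to the semantic mutual-exclusion assumption of Section~\ref{sec:problem} (``at any time at most one $x_i\in X$ is \emph{true}'') and argue that therefore at most one $C[i]$ has its guard satisfied. That is not the mechanism in the automaton model. The context template's idle-to-active edge is guarded by the shared Boolean \texttt{cs\_enabled}, which is initially \emph{true} for every instance; there is no guard tied to an external $x_i$. Consequently, in the initial configuration \emph{all} context instances are enabled, and nothing in the Section~\ref{sec:problem} assumptions prevents several of them from being ready to move. The exclusion is enforced operationally: the first context to leave \emph{idle} passes through a committed location and, as part of that atomic step, sets \texttt{cs\_enabled} to \emph{false}; by the UPPAAL semantics of committed locations the remaining context instances cannot interleave before that update takes effect, and afterwards their guard is false. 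Your ``delicate step'' paragraph gestures at committed locations but never identifies \texttt{cs\_enabled} as the actual lock, so the argument as written does not establish uniqueness.

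A smaller point: your appeal to the fairness assumption $\mathsf{G}\mathsf{F}(x_i)$ to obtain that some context eventually fires is similarly misplaced---in the automaton model the choice of which context becomes active is internal non-determinism, not an externally supplied valuation, so the existence of a first move comes simply from the fact that a context edge is enabled, not from fairness.
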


\begin{proof}
We use induction on the number of contexts.

\vspace{2mm}
\noindent\textbf{Base case $n=1$:} Here, there is one context ($C[0]$), one orchestrator ($O$), and one task automaton ($S[0]$). The orchestrator ($O$) and $S[0]$ are dependent on the synchronization channels $active$ and $go$, respectively. With $cs\_enabled$ initially set to $true$, $C[0]$ can transition from \textit{idle} to \textit{active}, synchronizing on $cs$ and $active[0]$ channels. This transition blocks any other automaton from moving out of \textit{idle}.

\vspace{2mm}
\noindent\textbf{Induction hypothesis:} Assume the lemma holds for $n=k$ contexts.

\vspace{2mm}
\noindent\textbf{Induction step:} For $n=k+1$, all contexts $C[0], C[1],...,C[k]$ are poised to transition from \textit{idle} to \textit{active}. Based on our hypothesis, only one context, say $C[i]$ ($0\leq i \leq k$), can effectively transition to \textit{active}. The base case ($n=1$) ensures that upon $C[i]$'s transition, $C[k+1]$ cannot. Thus, precisely one context can transition to the \textit{active} state, and no other automaton can take a transition under the initial condition.
We conclude the proof of Lemma~\ref{lem:contextfirst}.

\end{proof}

\vspace{2mm}
\begin{lemma}
\label{lem:tcontext}
With any number $n$ of contexts, exactly one is \textit{active} at any time after the first activation, remaining so for at least $t_\texttt{context}$ time units.
\end{lemma}
\begin{proof}
Per Lemma~\ref{lem:contextfirst}, contexts initiate motion. Initially, any context instance can transition from \textit{idle}. Non-deterministically, one instance sets $cs\_enabled$ to false, blocking others in \textit{idle}. This active context resets $cs\_enabled$ to true after $t_{\texttt{context}}$, enabling another context activation. On exiting \textit{idle}, the next context synchronizes with the active one on $cs$, transitioning the former to \textit{idle}. 
\end{proof}

\vspace{2mm}
\begin{lemma}
\label{lem:contextswitch}
No context can be activated twice consecutively.
\end{lemma}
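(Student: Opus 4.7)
The plan is to prove the lemma by contradiction, leveraging Lemma~\ref{lem:tcontext} together with the synchronization semantics of the broadcast channel $cs$ and the $cs\_enabled$ guard in the Context template. Suppose for contradiction that some context $C[i]$ is activated twice in a row---that is, there exist two successive activations of $C[i]$ with no intervening activation of any $C[j]$, $j \neq i$.

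First, I would examine how $C[i]$ leaves its \emph{active} state. By Lemma~\ref{lem:tcontext}, exactly one context is active at any time after the first activation, and the only outgoing transition from $C[i]$'s \emph{active} state back to \emph{idle} synchronizes on $cs?$. This synchronization is triggered by some other context firing $cs!$ as it transitions from \emph{idle} to \emph{active}. Because the switch is routed through committed locations, the transition is atomic, so at the exact step in which $C[i]$ reaches \emph{idle}, some $C[j]$ with $j \neq i$ simultaneously enters \emph{active}.

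Next, by the argument used in Lemma~\ref{lem:tcontext}, $C[j]$ must remain in \emph{active} for at least $t_{\texttt{context}}$ time units: the committed transition into \emph{active} sets $cs\_enabled$ to \emph{false}, and the guard is only released after $t^x \geq t_{\texttt{context}}$. Consequently, between the deactivation of $C[i]$ and any potential reactivation, there is an interval of length at least $t_{\texttt{context}}$ during which $C[j]$ is the sole active context. This contradicts the assumption that no other context is activated between the two activations of $C[i]$, completing the proof.

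The main obstacle is formally arguing that the context $C[j]$ that emits $cs!$ to deactivate $C[i]$ must be strictly different from $C[i]$. This relies on the Uppaal broadcast semantics: the sender of $cs!$ does not auto-synchronize with itself as a $cs?$ receiver, and $C[i]$ cannot simultaneously be in \emph{active} (to receive the $cs?$) and in \emph{idle} (to emit the $cs!$). Making this precise requires a careful appeal to the template construction and the atomicity imposed by the committed locations, but it is a direct consequence of the model rather than a deep combinatorial argument.
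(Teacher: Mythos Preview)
Your proposal is correct and rests on exactly the same observation as the paper: $C[i]$ can only leave \emph{active} by receiving $cs?$, and $cs!$ is sent precisely when some other context $C[j]$ transitions from \emph{idle} to \emph{active}, so a distinct context must be activated before $C[i]$ can be reactivated. The paper states this directly in three sentences; your contradiction framing and the detour through the $t_{\texttt{context}}$ lower bound are sound but unnecessary, since the activation of $C[j]$ already yields the contradiction without appealing to how long $C[j]$ stays active.
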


\begin{proof}
For a context $C[i]$ to transition back to \textit{idle}, it must synchronize on $cs?$, possible only when another context $C[j]$ ($i \neq j$) becomes active. Hence, $C[i]$ awaits at least one other context's activation before reactivation.
\end{proof}

The orchestrator, using a variable $cur$, tracks the active context. Upon a new context instance $x$ activation, the orchestrator:
\begin{enumerate}
\item Deactivates the current task automaton, $S[cur]$.
\item Activates the transition automaton from $cur$ to $x$, i.e., $T[cur][x]$.
\item Waits for transition completion via $done[cur][x]$ channel.
\item Activates the task automaton for context $x$, i.e., $S[x]$.
\end{enumerate}

\begin{lemma}
    \label{lem:transspecmutex}
    After the initial context activation, either a task automaton or a transition automaton is always in the \textit{active} state, but never both simultaneously.
    \end{lemma}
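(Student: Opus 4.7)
The plan is to establish an invariant that ties the orchestrator's state to the status of the task and transition automata, and then argue by induction on the events (context activations and transition completions) that this invariant is preserved. Concretely, define the invariant $\Phi$: whenever $O$ is in \emph{spec}, there is exactly one $S[i]$ in the \textit{active} state and no $T[i][j]$ is active; whenever $O$ is in \emph{transit}, there is exactly one $T[i][j]$ in the \textit{active} state and no $S[i]$ is active. The lemma follows immediately from $\Phi$, so it suffices to prove $\Phi$ from the moment the first context becomes active onward.

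For the base case, I would appeal to Lemma~\ref{lem:contextfirst}: after the first context $C[i]$ becomes active, the orchestrator synchronizes with $C[i]$ via $active[i]$ and, through a chain of transitions involving committed locations, synchronizes with $S[i]$ via $go[i]$. Because all intermediate locations are committed, this whole chain is atomic; no time elapses and no other automaton can interleave, so at the first time instant at which the orchestrator is in a non-committed state, it is in \emph{spec}, $S[i]$ is in \textit{active}, and every $T[j][k]$ is still in \textit{idle} from initialization. Thus $\Phi$ holds immediately after the first context activation.

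For the inductive step, I would carry out a case analysis on the only two kinds of events that can disturb $\Phi$. The first is a context switch, signaled on the broadcast channel $cs$: by Lemmas~\ref{lem:tcontext} and \ref{lem:contextswitch}, such a switch hands control from $C[cur]$ to some $C[x]$ with $x\neq cur$; the orchestrator, being in \emph{spec} by the inductive hypothesis, synchronizes on $active[x]$, invokes $deactivate()$ on $S[cur]$ via $stop[cur]$, moves to \emph{transit}, and invokes $trans\_start()$ on $T[cur][x]$ via $transit[cur][x]$, all within a committed chain. Hence the transition from the old spec-invariant to the new transit-invariant is instantaneous, and $\Phi$ is re-established. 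The second event is the firing of $done[cur][x]$: the orchestrator is in \emph{transit}, $T[cur][x]$ deactivates, and the orchestrator atomically activates $S[x]$ via $go[x]$, so $\Phi$ swings back to the spec branch.

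The main obstacle is justifying the atomicity carefully. The key observation is that in Uppaal's semantics, no time elapses in a committed location and any outgoing transition of the automaton holding a committed token must be taken before any non-committed transition fires elsewhere. I would therefore make precise which locations along each chain (activation, deactivation, transit start, transit end) are committed and argue that the only observable global states are those in which the orchestrator is in \emph{spec} or \emph{transit} in its stable location, precisely the states captured by $\Phi$. A minor subsidiary obligation is uniqueness: at each event only one $S[\cdot]$ or one $T[\cdot][\cdot]$ is enabled, because the orchestrator's variable $cur$ and the freshly received index $x$ uniquely identify the synchronizing partner, and no other automaton listens on the same $go$, $stop$, $transit$, or $done$ channel.
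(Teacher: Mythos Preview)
Your proposal is correct and follows essentially the same approach as the paper: both arguments rest on the atomicity guaranteed by committed locations, handle the initial activation as a base case, and then treat the two events (context switch and transition completion) to show that exactly one of a task or transition automaton is active at any observable instant. Your version is more carefully structured—making the invariant $\Phi$ explicit, tying it to the orchestrator's \emph{spec}/\emph{transit} state, and spelling out the uniqueness argument via the $cur$ variable—whereas the paper's proof is a brief informal walk through the same three cases without naming an invariant or invoking induction explicitly.
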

    \begin{proof}
    The orchestrator performs operations 1), 2), and 4) atomically, as modeled by committed locations in Figure~\ref{fig:automata}. Upon the first context activation, the orchestrator simultaneously activates a task automaton, ensuring an automaton's presence in the \textit{active} state. During a context switch, it deactivates the current task automaton, transitioning it to \textit{idle}, and activates a transition automaton, which becomes \textit{active}. Thus, at any moment, either a task automaton or a transition automaton is \textit{active}, but never both. Once the transition automaton transitions to \textit{idle}, the orchestrator activates the next task automaton in the same atomic operation, maintaining continuous activation of at least one automaton.
    \end{proof}

\vspace{2mm}

\begin{lemma}
    \label{lem:ttrans}
    A transition automaton's active period is bounded by $t_{\texttt{trans}}$ time units.
    \end{lemma}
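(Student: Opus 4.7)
The plan is to exploit two ingredients already in place: the local clock $t^t$ that every transition automaton resets upon entering its \textit{active} state, and the definition of $t_{\texttt{trans}}$ as an upper bound on the number of steps any $k_{r_ir_j}\in K$ needs to move from $r_i$ to $r_j$. I will argue that the \textit{active} location of a transition automaton carries a state invariant of the form $t^t \le t_{\texttt{trans}}$, so time cannot elapse past $t_{\texttt{trans}}$ without a transition being forced.

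First I would pin down, by inspection of the template $\textit{T-Automaton}$ in Figure~\ref{fig:automata}, the sequence of edges a transition automaton follows after a context switch: it is spawned from \textit{idle} by a synchronization on $transit[i][j]?$ coming from the orchestrator (this reset $t^t := 0$), it stays in \textit{active} while its Mealy counterpart $k'_{r_ir_j}\in K'$ is driven toward $r_j$, and it returns to \textit{idle} by firing $done[i][j]!$ once the destination is reached. Combining this with Remark~\ref{rem:lamdaprime} and the fact that the orchestrator's $trans\_start()$ turns on the activation input of exactly the right $k'_{r_ir_j}$, the number of Uppaal time ticks spent in \textit{active} equals the number of reaction steps the underlying Mealy machine $k_{r_ir_j}\in K$ takes to move from $r_i$ to $r_j$.

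Next I would invoke Definition~\ref{def:transitioncontrollers} (or the textual definition of $t_{\texttt{trans}}$) to conclude that this number of steps is at most $t_{\texttt{trans}}$, so the $done[i][j]!$ edge is guaranteed to fire with $t^t \le t_{\texttt{trans}}$. The state invariant $t^t \le t_{\texttt{trans}}$ on the \textit{active} location then certifies that no run can let the automaton linger longer: either the Mealy machine reaches its target and the $done$ edge is taken, or the invariant is violated, which Uppaal disallows. Together these rule out any execution in which the automaton stays active for more than $t_{\texttt{trans}}$ ticks.

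The main obstacle I anticipate is not the temporal argument itself but bridging the two distinct notions of time used in the paper: the continuous Uppaal clock $t^t$ that guards the invariant, and the discrete reaction steps of the composed Mealy machine $\Lambda^{||}$. I would therefore spend a short paragraph formalizing the discretization described in the subsection on synchronization with Mealy machines, namely that one tick of $t^t$ corresponds to one reaction of $\Lambda^{||}$, so that the bound $t_{\texttt{trans}}$ on reaction steps and the bound $t_{\texttt{trans}}$ on clock values coincide. Once that correspondence is stated cleanly, the lemma follows immediately from the invariant and the definition of $t_{\texttt{trans}}$.
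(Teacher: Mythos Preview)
Your proposal is correct, and it does contain the paper's argument as one ingredient, but it is considerably more elaborate than what the paper actually does. The paper's proof is a single sentence: it observes that the \textit{active} location of the $\textit{T-Automaton}$ template carries the state invariant $t^t \le t_{\texttt{trans}}$, and by timed-automata semantics this alone forces the automaton to leave that location before the clock exceeds $t_{\texttt{trans}}$. Nothing about the underlying Mealy machines, the definition of $t_{\texttt{trans}}$ as a step bound on $k_{r_ir_j}$, or the discretization of time is invoked.

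The extra work you propose---tracing the $transit$/$done$ synchronizations, arguing that the Mealy counterpart reaches $r_j$ within $t_{\texttt{trans}}$ reactions, and reconciling continuous Uppaal clocks with discrete Mealy ticks---is not wrong, but it is answering a stronger question: namely, that the outgoing $done$ edge is actually enabled in time, so the invariant is not violated by a timelock. The paper does not address that here; it treats deadlock/timelock freedom separately via the Uppaal model-checking queries (in particular \texttt{A[] not deadlock}). For the lemma as stated, the invariant alone suffices, and your paragraph on bridging the two time models, while conceptually useful, is unnecessary for this particular claim.
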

    \begin{proof}
    The \textit{active} state of a transition automaton carries an invariant that necessitates a transition to \textit{idle} when $t^{t} \geq t_{\texttt{trans}}$. This time constraint ensures that the active period does not exceed $t_{\texttt{trans}}$ units.
    \end{proof}
    
\begin{theorem}
\label{thm:orchestration}
Given $n$ contexts, $n$ task automata, $n^2-n$ transition automata, and assuming $t_{\texttt{trans}} < t_{\texttt{context}}$, the following holds:
\begin{itemize}
\item Whenever a context $C[i]$ is active, $S[i]$ becomes active within $t_{\texttt{trans}}$ time units, and no other transition or task automaton becomes active.
\end{itemize}
\end{theorem}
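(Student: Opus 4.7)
The plan is to combine Lemmas~\ref{lem:contextfirst}--\ref{lem:ttrans} with a trace of the orchestrator's deterministic program in Figure~\ref{fig:automata}. I split the statement into two parts: (a) whenever $C[i]$ becomes active, $S[i]$ becomes active no later than $t_{\texttt{trans}}$ time units afterwards and remains active throughout the rest of $C[i]$'s window; and (b) during $C[i]$'s active window, the only task automaton that can become active is $S[i]$, and the only transition automaton that can become active is the one performing the current switch into $i$, namely $T[cur][i]$.

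For part (a), I would trace the chain of synchronizations triggered by $C[i]$ becoming active. By Lemmas~\ref{lem:contextfirst} and~\ref{lem:tcontext}, the activation of $C[i]$ synchronizes with the orchestrator on the channel $active[i]$. Through a sequence of committed locations, which by their Uppaal semantics take zero time, the orchestrator atomically stops the previously active task automaton $S[cur]$ and activates $T[cur][i]$ via $transit[cur][i]$. By Lemma~\ref{lem:ttrans}, $T[cur][i]$ remains in its \textit{active} state for at most $t_{\texttt{trans}}$ time units, after which it synchronizes back with the orchestrator on $done[cur][i]$; again through a committed step the orchestrator fires $go[i]$, activating $S[i]$. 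Since $C[i]$ is active for at least $t_{\texttt{context}}$ by Lemma~\ref{lem:tcontext} and by hypothesis $t_{\texttt{trans}} < t_{\texttt{context}}$, $S[i]$ becomes active before $C[i]$ is deactivated.

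For part (b), I would use Lemma~\ref{lem:transspecmutex}, which guarantees that at every time step at most one task or transition automaton is in the \textit{active} state. The only $go[\cdot]$ and $transit[\cdot][\cdot]$ synchronizations fired during $C[i]$'s window are those issued by the orchestrator in response to $C[i]$ becoming active, and by the orchestrator's program these uniquely target $T[cur][i]$ and then $S[i]$. Lemmas~\ref{lem:contextswitch} and~\ref{lem:tcontext} together rule out any further context switch during $C[i]$'s window, so the orchestrator does not fire any additional $transit$ or $go$ channel before $C[i]$ is deactivated. Hence no other task or transition automaton can become active.

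The main obstacle I anticipate is justifying the atomicity of the orchestrator's multi-step reaction so that no interleavings of other automata occur between stopping $S[cur]$, starting $T[cur][i]$, and eventually starting $S[i]$. This relies carefully on the semantics of Uppaal committed locations and on the fact that the variable $cur$ is read and updated consistently within a single committed chain. A secondary subtlety is aligning the discrete-time reactions of the Mealy composition $\Lambda^{||}$ with the continuous-time progress of the timed automata, so that the zero-time committed transitions and $t_{\texttt{trans}}$-bounded windows correspond to the correct number of Mealy reactions.
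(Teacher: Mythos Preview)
Your plan is essentially the same as the paper's: both arguments combine Lemmas~\ref{lem:contextfirst}--\ref{lem:ttrans} with a trace of the orchestrator's committed-location chain, and both use $t_{\texttt{trans}} < t_{\texttt{context}}$ together with Lemmas~\ref{lem:tcontext} and~\ref{lem:contextswitch} to rule out any further context switch before $S[i]$ is activated.

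One point you omit and the paper treats explicitly is the \emph{initial phase}: when $C[i]$ is the first context ever to become active, there is no previously running $S[cur]$ to stop and no transition automaton $T[cur][i]$ to traverse; the orchestrator fires $go[i]$ directly and $S[i]$ becomes active immediately. Your part~(a) silently assumes a prior active task automaton exists, so the committed chain you describe (stop $S[cur]$, start $T[cur][i]$, wait, start $S[i]$) does not apply to the very first activation. Adding this base case is easy but necessary for the argument to cover all activations of $C[i]$. Your remarks about aligning the Mealy reactions with the timed-automata clocks are not needed here; the paper defers that to Theorem~\ref{thm:missionsatisfaction}.
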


\begin{proof}
\emph{Initial phase:} With the orchestrator and all and automata in \textit{idle}, Lemmas~\ref{lem:contextfirst} and \ref{lem:transspecmutex} imply the system is initially waiting for a single context to activate. If $C[i]$ becomes active, the orchestrator promptly activates $S[i]$ via the $go$ channel.

\emph{Context Switch:} Assuming a current active context $C[j]$ ($j \neq i$), Lemma~\ref{lem:tcontext} dictates $C[i]$ cannot activate within $t_\texttt{context}$ units. If $C[i]$ activates after $t_\texttt{context}$ units from $C[j]$'s activation, Lemmas~\ref{lem:ttrans} and \ref{lem:transspecmutex}, combined with $t_{\texttt{trans}} < t_{\texttt{context}}$, mean that $S[j]$ is active when $C[i]$ activates. The orchestrator then stops $S[j]$ and starts $T[j][i]$. Within $t_{\texttt{trans}}$ units, $S[i]$ activates. Given $t_{\texttt{trans}} < t_{\texttt{context}}$ and Lemmas~\ref{lem:tcontext} and \ref{lem:contextswitch}, it is guaranteed that $S[i]$ activates.
\end{proof}

\subsubsection{Starvation}
The orchestration system, while ensuring mutual exclusion between contexts and maintaining each active for at least $t_{\texttt{context}}$ units, does not dictate context scheduling. Thus, potential exists for some contexts to never activate, consequently preventing associated task automata from activation.

\begin{remark}
Assuming a fair scheduler activating all contexts  infinitely often and $t_{\texttt{trans}} < t_{\texttt{context}}$, Theorem \ref{thm:orchestration} implies all task automata will activate  infinitely often, preventing starvation of any context.
\end{remark}

\subsubsection{Model Checking}
We employed UPPAAL\footnote{UPPAAL model available: \url{rebrand.ly/cromeuppaalmodel}}\cite{behrmann2006uppaal} to model our orchestration system as a network of timed automata (Figure~\ref{fig:automata}), parametrized by context count $N$. Our model successfully verified the following properties:

\begin{center}
\textit{`No more than one T-Automaton is active at any time'}\\

    \texttt{A[] forall (i: N) forall (j: N) forall (l: N) forall (m: N) T\_automaton(i,l).Active \& T\_automaton(j,m).Active imply (i==j \& l==m)}
\end{center}

\begin{center}
\textit{`Always at most one context active at a time'}\\

    \texttt{A[] forall (i: N) forall (j: N) Context(i).Active \& Context(j).Active imply (i==j)}
\end{center}

\begin{center}
\textit{`There is always at most one S-automaton active at any given time'}\\

    \texttt{A[] forall (i: N) forall (j: N) S-automaton(i).Active \& S-automaton(j).Active imply (i==j)}
\end{center}

\begin{center}
\textit{`If any context becomes active its S-automaton will eventually become active'}\\
    \texttt{Context(i).Active --> S\_automaton(i).Active}
\end{center}

\begin{center}
\textit{`The system is deadlock-free'}\\
    \texttt{A[] not deadlock}
\end{center}

\begin{theorem}
    \label{thm:missionsatisfaction}[Mission Satisfaction]
    Given a mission $\mathcal{M} = \{X, G\}$, where each mission scenario $\mathcal{M}_i = (x_i, \gamma_i)$ corresponds to a context $x_i \in X$ and a LTL contract $\gamma_i$, and the composition of all controllers in the set $\Lambda^*$ as a finite-state machine $\Lambda^{||}$, assuming:
    \begin{enumerate}
    \item Each $\gamma_i$ is realized by a corresponding task controller $\lambda_i$ within $\Lambda^*$;
    \item For every possible context switch, a transition controller $k$ exists within $\Lambda^*$, and $t_{\texttt{trans}} < t_{\texttt{context}}$;
    \end{enumerate}
    Then the behaviors produced by $\Lambda^{||}$, combined with \CROME's orchestration system, will always satisfy the mission $\mathcal{M}$.
    \end{theorem}
    
    \begin{proof}
    Theorem~\ref{thm:orchestration} ensures that when a context $x_i$ becomes active, the corresponding task automaton activates within $t_\texttt{trans}$ units. The task and transition automata behaviors, representing the actual task and transition controllers, follow the \emph{activate()}, \emph{deactivate()}, \emph{trans\_start()}, and \emph{trans\_end()} functions. When these functions are not activated, the Mealy machines respond in synchronization with the timed automata ticks.
    
    When a context $x_i$ becomes active at time $t$, it implies that the mission scenario $\mathcal{M}_i = (x_i, \gamma_i)$ is active, with $\lambda_i \models \gamma_i$. The behavior of $\lambda_i'$ is equivalent to $\lambda_i$ when it is active. The \textit{Active} state in Section~\ref{rem:active} indicates that a task automaton is in the active state.
    
    Given that $\lambda_i$ satisfies $\gamma_i$ by construction, the mission $\mathcal{M}$ is always satisfied by the robot $\mathcal{R}$ as per the mission definition. Therefore, for any activated context $x_i$, the behaviors produced by $\Lambda^{||}$ in conjunction with \CROME's orchestration system always satisfy the mission scenario $\mathcal{M}_i$. Since this theorem holds for all $x_i \in X$, it holds for the entire mission $\mathcal{M}$, thereby proving the theorem.
    \end{proof}

\section{Case Study: Implementing Contextual Mission with \NAME}
\label{sec:casestudy}

In this section, we illustrate how a designer can use \NAME to model the goals of our running example and produce traces that always satisfy the contextual mission.

\subsection{Orchestration}

\begin{figure*}[h]
  \centering
  \includegraphics[width=0.8\linewidth]{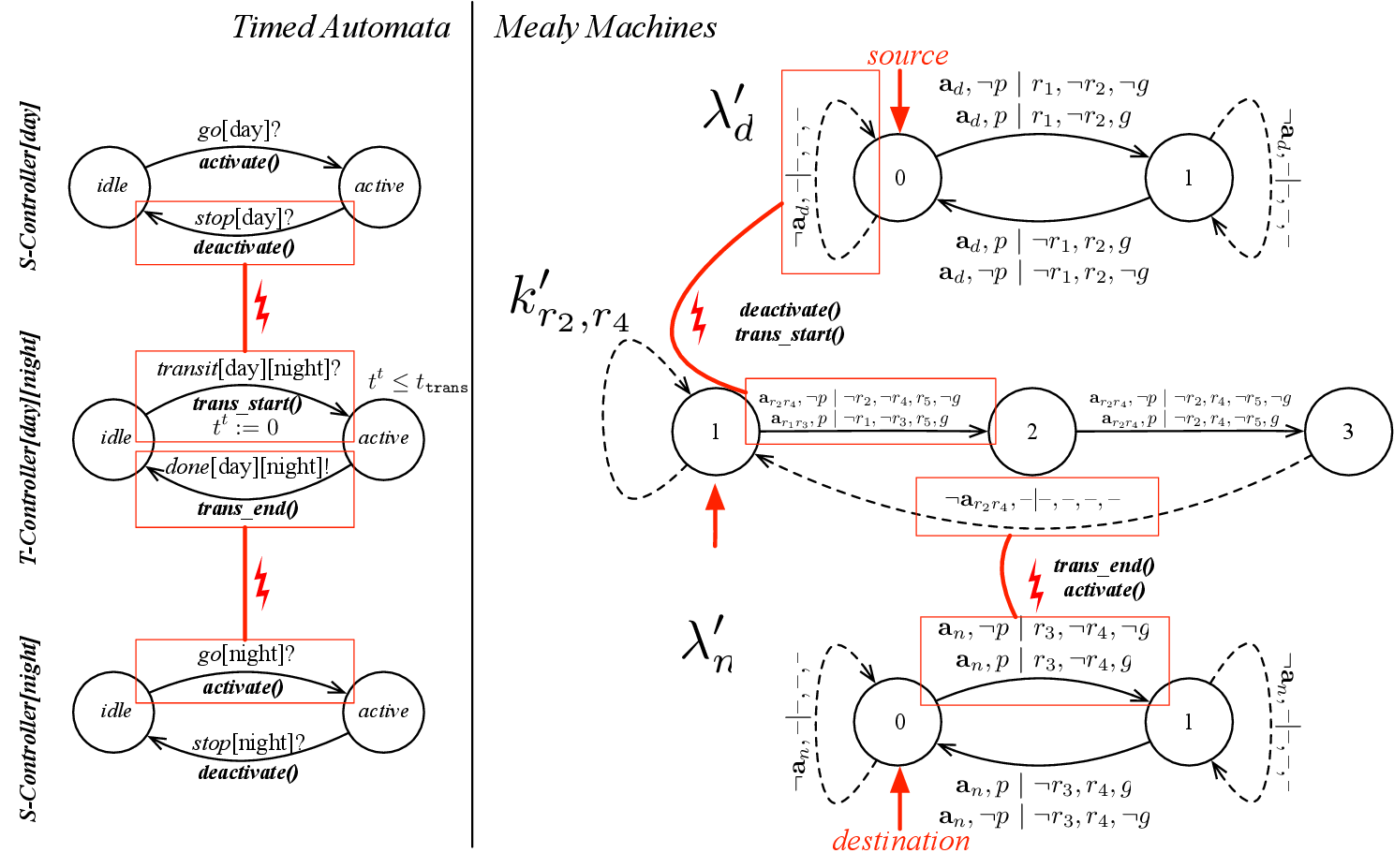}
  \caption{Synchronization between Timed Automata and Mealy Machines. The red arrows indicate the state of the machines at time $t$, when we have a context switch from context $day$ to $night$. The connected red boxes indicate transitions that happen in parallel.} 
  \label{fig:Mealysynch}
\end{figure*}

\vspace{2mm}
Consider the two contexts in our running example: $day$ and $night$. We have two instances of \textit{S-Automaton}, namely \textit{S[day]} and \textit{S[night]}, and two task controllers $\lambda_d'$ and $\lambda_n'$ that realize the specifications $\gamma_d$ and $\gamma_n$, which correspond to the contexts $day$ and $night$ respectively. Initially, we focus on goals $\mathcal{G}_1$, $\mathcal{G}_2$, and $\mathcal{G}_4$, omitting $\mathcal{G}_3$ which requires the robot to register a person. 
Figure~\ref{fig:Mealysynch} illustrates the Mealy machines $\lambda_d'$ and $\lambda_n'$ which realize $\gamma_d$ and $\gamma_n$. These machines have been adapted to accept the activation inputs $\textbf{a}_d$ and $\textbf{a}_n$.

At any time, only one activation input of $\Lambda^{||}$ is set to \textit{true}, indicating that only one Mealy machine in the composition $\Lambda^{||}$ is active. The following functions are executed on the transitions of the \textit{S-Automaton} and \textit{T-Automaton}:

\begin{itemize}
  \item \textbf{activate()}: Executed by every \textit{S-Automaton} instance when transitioning from \textit{idle} to \textit{active}. It activates the corresponding task controller. For example, \textbf{activate()} of \textit{S[day]} sets $\textbf{a}_d$ of $\lambda_d'$ to \textit{true}, and \textit{S[night]}'s \textbf{activate()} sets $\textbf{a}_n$ of $\lambda_n'$ to \textit{true}.
\item \textbf{deactivate()}: The opposite of \textbf{activate()}, it sets the activation inputs of the corresponding task controllers to \textit{false} after a context switch.
\item \textbf{trans\_start()}: Activates the appropriate transition controller for each context switch. The selected transition controller depends on the robot's location at the time of the context switch and its required destination in the new context.
\item \textbf{trans\_end()}: Executed when the transition controller reaches its final state, setting its activation input to \textit{false}.
\end{itemize}

\begin{figure*}[t]
  \centering
  \includegraphics[width=1\linewidth]{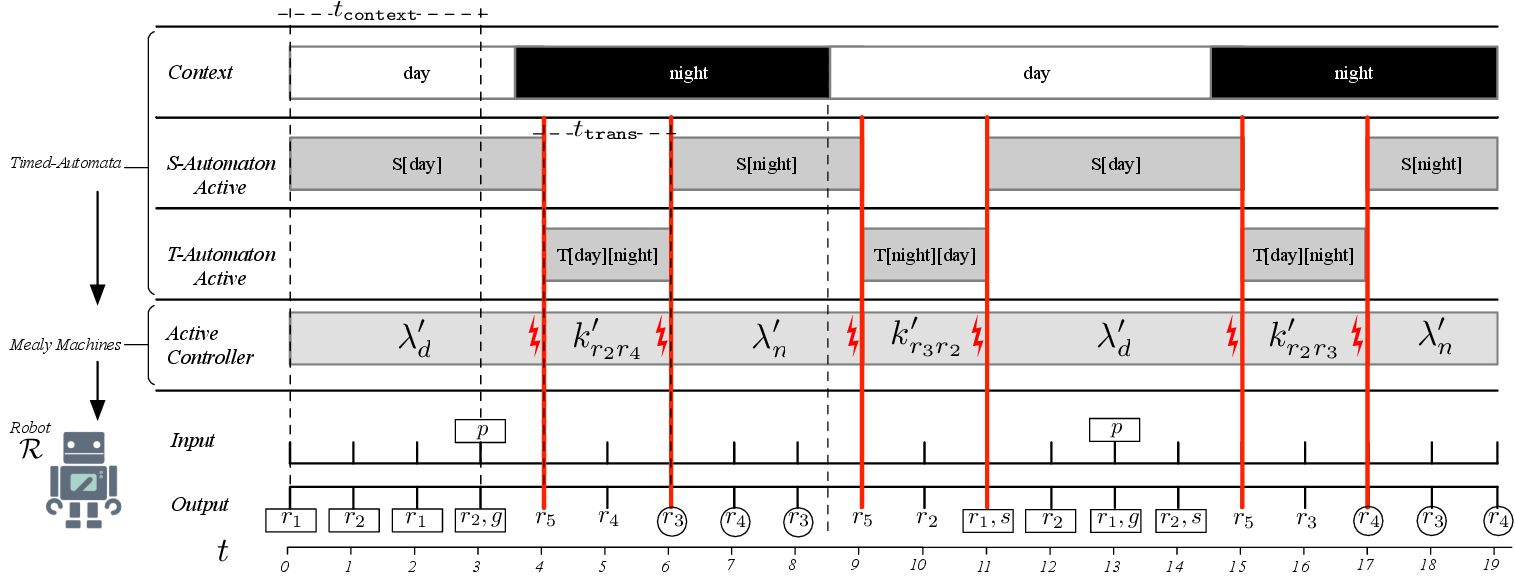}
  \caption{Timeline.} 
  \label{fig:timeline}
\end{figure*}

Consider a context switch at time $t$ where $\lambda_n'$ is active in state $0$, and $\lambda_d'$ is inactive in state $0$. Each state represents the last region reached by the robot, i.e., region $r_2$ for $\lambda_d'$ and region $r_4$ for $\lambda_n'$.

\textbf{Time t}: \textit{S[day]} transitions from \textit{active} to \textit{idle}, and \textit{T[day][night]} from \textit{idle} to \textit{active}. \textbf{deactivate()} sets $\textbf{a}_d$ to \textit{false}, and \textbf{trans\_start()} activates the transition controller connecting $\lambda_n'$ state $0$ to $\lambda_d'$ state $0$.

\textbf{Time t+1}: Both \textit{S[day]} and \textit{S[night]} are in \textit{idle}, while \textit{T[day][night]} is \textit{active}. The transition controller $k'_{r_2r_4}$ moves from state $2$ to $3$.

\textbf{Time t+2}: Transition controller $k'_{r_2r_4}$ reaches its final state. \textit{T[day][night]} transitions to \textit{idle}, executing \textbf{trans\_end()} and setting $\textbf{a}_{r_2r_4}$ to \textit{false}. Simultaneously, \textit{S[night]} transitions to \textit{active}, and \textbf{activate()} sets $\textbf{a}_n$ to \textit{true}.

In the comprehensive example, we include all goals $\mathcal{G}_1$, $\mathcal{G}_2$, $\mathcal{G}_3$, and $\mathcal{G}_4$ in the CGG under two scenarios: \textit{1)} $\mathcal{G}_d$ for daytime tasks of greeting and registering people in regions $r_1$ and $r_2$; \textit{2)} $\mathcal{G}_n$ for nighttime tasks of greeting people in regions $r_3$ and $r_4$.

Figure~\ref{fig:timeline} shows the event timeline, where the context switches from $day$ to $night$ after a minimum of $t_{\text{context}}$ time units. Timed automata synchronize after each context switch, with \textit{T-Automata} active for up to $t_{\text{trans}}$ time units. The trace of robot $\mathcal{R}$ is defined by its input-output relationship. The output symbols in \textit{boxes} are part of the trace projections for the contract $\gamma_d$ (related to goal $\mathcal{G}_d$), while those in \textit{circles} are for $\gamma_n$ (related to goal $\mathcal{G}_n$). At step 3, the robot perceives a person $p$ and immediately greets $g$, but registration $s$ occurs at step 11, the next time $day$ context is active and the transition period has ended.

\section{Implementation}
\label{sec:implementation}

\NAME is implemented as a web-based tool, available open-source\footnote{\NAME tool: \url{rebrand.ly/crometool}}. The tool leverages NuXMV~\cite{nuxmv} and STRIX~\cite{strix} as back-end engines for checking the satisfiability and realizability of specifications, respectively. After constructing the CGG, each mission scenario is realized as a task controller in the form of a Mealy machine.

For the realization of the transition controller, two main factors are considered:

1.\textit{Number of Location Pairs}: Depending on the mission specifications, certain context switches may necessitate the robot moving to different locations. In the worst-case scenario, where every context switch requires a location change, we define a set of regions $R$ as the union of mutually exclusive region sets $\Omega_{R1}, \Omega_{R2}, ..., \Omega_{Rn}$, corresponding to context variables $x_1, x_2, ..., x_n$. Thus, the total number of location pairs requiring a transition controller is $(|\Omega_{R1}| \times |\Omega_{R2}| \times ... \times |\Omega_{Rn}|) \times 2$, considering all possible pairs and permutations.

2. \textit{Generation of Transition Controllers}: Given the environment map, creating a transition controller essentially becomes a shortest path problem. Although \NAME uses reactive synthesis, any algorithm that solves this problem, such as Dijkstra's algorithm, could be utilized.

In the running example, up to eight transition controllers might be needed for transitions between the "day" and "night" contexts. However, \NAME employs an optimization algorithm considering the environment map and adjacent locations, minimizing the number of controllers. For instance, if a central location like $r_5$ in Figure~\ref{fig:illustrative} is adjacent to all other locations, the robot can reach any destination via $r_5$, reducing the number of necessary controllers.

\NAME's algorithm for generating transition controllers seeks a path from the source to the destination location within the time constraint of $t_{\texttt{trans}}$. It explores paths of length $1 < N < t_{\texttt{trans}}$ until a suitable one is found. A more sophisticated implementation in \NAME considers the environmental map and adjacency of locations to further reduce the number of controllers.

\section{Evaluation}
\label{sec:evaluation}

\newcommand\mstime{$35.6~$}

\newcommand\msstates{$119~$}
\newcommand\mstransitions{$357~$}
\newcommand\cstime{$0.14~$}
\newcommand\csstates{$5 ~\text{and}~ 3~$}
\newcommand\cstransitions{$~10 ~\text{and}~ 6~$}

\newcommand\nttrans{$16~$}
\newcommand\talltrans{$0.92~$}
\newcommand\cggtime{$0.54~$}
\newcommand\totaltime{$1.60~$}

\newcommand\cfaster{$254~$ times}
\newcommand\cfastermod{$22~$ times}

\begin{table}[t]
{\small
    \begin{TAB}(r,1cm,1cm)[5pt]{|r|c|c|}{|c|c|c|c|}
        & Monolithic LTL  & \NAME \\
        Synthesis Time (sec) & \mstime & \cstime \\
        Number of States & \msstates & \csstates \\
        Number of Transitions & \mstransitions & \cstransitions
    \end{TAB}    
    \caption{Comparison of \NAME with the synthesis from a single LTL formula in order to realize the running example.}
    \label{tab:evaluation}
    }
\end{table}

To evaluate our approach, we measured the synthesis time required by \NAME to generate traces for the robotic mission in our example. Given the manageable size of our example, we created a 'monolithic' LTL specification encompassing all desired behaviors and constraints. This specification required additional variables for context tracking, active signal management, and transition conditions. An extra Boolean variable was also necessary to manage potential halts in robot execution due to context switches, significantly enlarging the LTL formula. The complete specification and test can be found online\footnote{\NAME evaluation: \url{rebrand.ly/cromeevaluation}}.

Table~\ref{tab:evaluation} compares the synthesis time and size of the controller for the monolithic LTL approach versus \NAME. The monolithic LTL synthesis took \mstime seconds, yielding a controller with \msstates states and \mstransitions transitions. In contrast, \NAME synthesized the smaller specifications $\gamma_1$ and $\gamma_2$ related to the goals $\mathcal{G}{ng}$ and $\mathcal{G}{dgr}$ in just \cstime seconds, which is \cfaster faster than the monolithic approach. The resulting controllers for these two contexts had \csstates states and \cstransitions transitions, respectively.

The table does not include the time for generating transition controllers. \NAME's reactive synthesis approach took \talltrans seconds to create all 16 transition controllers. Additionally, the CGG creation, involving 11 satisfiability checks and 10 validity checks to derive $\gamma_1$ and $\gamma_2$ from the four input goals, took \cggtime seconds.

Overall, including CGG creation and transition controller synthesis, the total time was \totaltime seconds, still \cfastermod faster than synthesizing the monolithic LTL specification.

\section{Related Work}\label{sec:related}

Despite valuable contributions to the state-of-the-art in the field, dealing with the variability of the environment and addressing context switching remain open challenges. This section presents works related to mission specifications (Section~\ref{sec:missionSpecification}) and reactive synthesis applied to robotics (Section~\ref{sec:reactiveSynthesis}). We also briefly discuss the concept of context (Section~\ref{sec:bncontext}).

\subsection{Mission Specification}\label{sec:missionSpecification}

A robotic mission is typically defined as a set of tasks or goals for a robot to accomplish. Various approaches for specifying missions have been proposed, including logics~\cite{menghi2018multi,ulusoy2011optimal,fainekos2009temporal,guo2013revising,wolff2013automaton,kress2011robot,doi:10.1177/0278364914546174}, state charts~\cite{bohren2010smach,thomas2013new,klotzbucher2012coordinating}, and Petri Nets~\cite{wang1991petri,ziparo2008petri}. However, the need for formal methods knowledge can be a barrier to wider adoption. Domain-specific languages (DSLs) have been introduced to make mission specification more accessible~\cite{GLRSWWCA12,campusano.ea:2017:live,DBLP:journals/corr/SchwartzNAM14,Ruscio2016,Bozhinoski2015,Ciccozzi4496,Doherty2012}. These DSLs provide high-level, user-friendly interfaces for mission specification.

A recent survey by Dragule et al.~\cite{dragule2021b} on DSLs for robot mission specifications classifies them into internal and external DSLs and overviews their tooling support. Additionally, robotic patterns have been proposed to address recurrent mission specification problems~\cite{SpecificationPatternsTSE}. The PsALM tool by Menghi et al.~\cite{PSALM} enables the specification of robotic missions using mission specification patterns, which capture robot movements and actions as the robot moves in the environment. \NAME supports 22 patterns~\cite{SpecificationPatternsTSE} covering various mission requirements, organized into core movement patterns, triggers, and avoidance patterns. For instance, the \textit{Strict Ordered Patrolling} pattern\footnote{\url{http://roboticpatterns.com/pattern/strictorderedpatrolling/}} can be used to specify a requirement for a robot to patrol a set of locations in a specific order. Let $l_1, l_2$, and $l_3$ be the atomic propositions of type \textit{location} that the robot must visit in the given order. The mission requirement can be reformulated in LTL as shown above, demonstrating how a robotic pattern can simplify the complex task of mission specification.  

Despite these efforts, formalizing mission requirements in specifications remains challenging. \NAME uses a catalog of patterns by Dywer et al.~\cite{patterns, dywerpatterslink} to formally yet intuitively express the robot's tasks.

\subsection{Context}\label{sec:bncontext}

The concept of context has been extensively discussed in the literature. In ubiquitous computing, context may include location, time of day, nearby objects and people, and temperature as proposed by Krumm ~\cite{Krumm2010}. Dey and Abowd~\cite{dey2001understanding} define context as any information characterizing an entity's situation. In robotics, Bloisi et al.~\cite{Bloisi2016} describe mission-related contexts as choices enabling robots to adapt to different situations.

Capturing the exceptional behaviors robots must exhibit to cope with real-world environmental variability is a significant challenge in specifying robotic missions (Garcia et al.~\cite{ESECFSE2020)}. \NAME formalizes context as a property associated with a goal and expressed through logic formulas, building on existing definitions. This approach enables the specification of exceptional behaviors in response to environmental changes.

\subsection{Reactive Synthesis}\label{sec:reactiveSynthesis}

Reactive synthesis in robotics focuses on synthesizing controllers. A compositional approach to reactive synthesis by He et al.~\cite{Moshe2019} uses linear temporal logic on finite traces (LTLf) for finite-horizon tasks. The work by Salar~\cite{Salar2018} presents algorithms for synthesizing controllers for swarm robotic systems. Maoz and Ringert in~\cite{MaozRose} identify challenges in applying reactive synthesis to robotics, including synthesis algorithms, development processes and tools, declarative specification writing, and data and time abstraction.

One method to handle real-world variability, as discussed in Section~\ref{sec:missionSpecification}, is to decompose mission specifications into context-dependent sub-missions. The work by Nahabedian et al.~\cite{nahabedian_2020} proposes automatic controller computation for transitioning between specifications. However, this runtime synthesis may be problematic for immediate context switching. The work by D'Ippolito et al.~\cite{ippolito2014_hope} introduces a hierarchy of discrete event controllers for graceful degradation and progressive enhancement, enabling instantaneous switching with behavioral guarantees, though it requires predefined hierarchical layers.

\NAME offers an end-to-end solution for specifying and realizing contextual mission specifications. It simplifies mission specification by decomposing it into context-dependent sub-missions. After specifying the mission in sub-missions, \NAME automatically synthesizes (i) a controller for each sub-mission satisfying the contextual mission and (ii) an orchestrator to support context switching, ensuring the overall contextual mission's satisfaction. Unlike the approach in~\cite{nahabedian_2020}, \NAME allows indefinite context switching while orchestrating the controllers.

\section{Conclusions}
\label{sec:conclusion}

In this paper, we introduced a comprehensive design framework tailored for formalizing and realizing robotic requirements in contextual missions. Central to our approach is \NAME, a novel tool that bridges the gap between informal requirements and formal mission specifications. By leveraging specification patterns and context-based modeling, \NAME translates ambiguous requirements into precise, formal goals. These goals are structured into assume-guarantee contracts and systematically organized within a Contract-Based Goal Graph (CGG).

A key innovation of our framework lies in its ability to handle dynamic controller switching in response to varying mission scenarios, each dictated by distinct, mutually exclusive contexts. The orchestration of these context-specific controllers is adeptly managed through \CROME, which not only analyzes and synthesizes the mission specification but also ensures efficient and seamless transitions between controllers.

Our empirical results demonstrate a significant improvement in synthesis efficiency, with \NAME yielding leaner, more modular specifications that accelerate the synthesis process. This advance represents a substantial stride in the realm of robotic mission planning, particularly in scenarios demanding adaptability and precision.

Looking ahead, we aim to broaden the scope of our framework. Our future endeavors include the development of a new logic language specifically tailored to express and handle the complexities inherent in the class of problems addressed in this study. Additionally, we plan to undertake more extensive validation of our approach, particularly through larger-scale industrial case studies. This will not only test the robustness of our framework but also provide valuable insights for further refinement and adaptation to real-world applications.

\section*{Acknowledgment}

This research was supported in part by the Wallenberg AI Autonomous Systems and Software Program (WASP), funded by the Knut and Alice Wallenberg Foundation.

The authors acknowledge the support of the PNRR MUR project VITALITY (ECS00000041), Spoke 2 ASTRA - ``Advanced Space Technologies and Research Alliance", of the PNRR MUR project CHANGES (PE0000020), Spoke 5 ``Science and Technologies for Sustainable Diagnostics of Cultural Heritage'', the PRIN project P2022RSW5W -
RoboChor: Robot Choreography, the PRIN project 2022JKA4SL - HALO: etHical-aware AdjustabLe autOnomous systems,
and of the MUR (Italy) Department of Excellence 2023 - 2027 for GSSI. 

The work of P.\ Pelliccione was also partially supported by the Centre of EXcellence on Connected, Geo-Localized and Cybersecure Vehicles (EX-Emerge), funded by the Italian Government under CIPE resolution n. 70/2017 (Aug. 7, 2017).

G.\ Schneider was partially supported by the Swedish Research Council (Vetenskapsr\aa det) under
grant Nr.~2019-04951.

\ifCLASSOPTIONcaptionsoff
  \newpage
\fi

\bibliographystyle{IEEEtran}
\bibliography{mybibfile}

\end{document}